\newcommand {\sgn} { {\rm sgn} }
\newcommand {\rmd} { {\mathrm d} }
\newcommand{\beq}{\begin{equation}}
\newcommand{\eeq}{\end{equation}}
\newcommand{\bea} {\begin{array}{rl}}
\newcommand{\eea} {\end{array}}
\newcommand{\bepa}{\left\{ \begin{array}{l}}
\newcommand{\eepa} {\end{array}\right.}
\begin{document}

\title{Bridging the gap between individual-based and continuum models of growing cell populations
}

\titlerunning{From individual-based to continuum models of growing cell populations}        

\author{Mark A J Chaplain \and Tommaso Lorenzi \and Fiona R Macfarlane 
}


\institute{
	   Mark A J Chaplain \at
            School of Mathematics and Statistics, University of St Andrews, St Andrews, KY16 9SS
            \email{majc@st-andrews.ac.uk} 
	    \and	
	    Tommaso Lorenzi \at
            School of Mathematics and Statistics, University of St Andrews, St Andrews, KY16 9SS
            \email{tl47@st-andrews.ac.uk} 
	     \and
             Fiona R Macfarlane \at
             School of Mathematics and Statistics, University of St Andrews, St Andrews, KY16 9SS\\
             \email{frm3@st-andrews.ac.uk}           
}

\date{Received: date / Accepted: date}

\maketitle

\begin{abstract}
Continuum models for the spatial dynamics of growing cell populations have been widely used to investigate the mechanisms underpinning tissue development and tumour invasion. These models consist of nonlinear partial differential equations that describe the evolution of cellular densities in response to pressure gradients generated by population growth. Little prior work has explored the relation between such continuum models and related single-cell-based models. We present here a simple stochastic individual-based model for the spatial dynamics of multicellular systems whereby cells undergo pressure-driven movement and pressure-dependent proliferation. We show that nonlinear partial differential equations commonly used to model the spatial dynamics of growing cell populations can be formally derived from the branching random walk that underlies our discrete model. Moreover, we carry out a systematic comparison between the individual-based model and its continuum counterparts, both in the case of one single cell population and in the case of multiple cell populations with different biophysical properties. The outcomes of our comparative study demonstrate that the results of computational simulations of the individual-based model faithfully mirror the qualitative and quantitative properties of the solutions to the corresponding nonlinear partial differential equations. Ultimately, these results illustrate how the simple rules governing the dynamics of single cells in our individual-based model can lead to the emergence of complex spatial patterns of population growth observed in continuum models.

\keywords{Growing cell populations \and Pressure-driven cell movement \and Pressure-limited growth \and Individual-based models \and Continuum models}
\subclass{92C17 \and 35Q92 \and 92-08 \and 92B05 \and 35C07}
\end{abstract}

\section{Introduction}
\label{intro}
\paragraph{Brief overview of continuum models of growing cell populations.} Continuum models for the spatial dynamics of growing cell populations have been widely used to complement empirical research in developmental biology and cancer research. These models consist of nonlinear partial differential equations that describe the evolution of cellular densities in response to pressure gradients generated by population growth, which can be mechanically regulated, nutrient-limited or pressure-dependent~\cite{ambrosi2002mechanics,ambrosi2002closure,araujo2004history,bresch2010computational,byrne2010dissecting,byrne1995growth,byrne1996growth,byrne1997free,byrne2009individual,byrne2003two,byrne2003modelling,chaplain2006mathematical,chen2001influence,ciarletta2011radial,greenspan1976growth,lowengrub2009nonlinear,perthame2014some,preziosi2003cancer,ranft2010fluidization,roose2007mathematical,sherratt2001new,ward1999mathematical,ward1997mathematical}.

For a population of cells with growth rate that depends on the local pressure~\cite{bru2003universal,byrne2003modelling,drasdo2012modeling,ranft2010fluidization}, a prototypical example of such models is given by the following equation 
\beq
\label{eq:pde1}
\partial_t \rho - \mu \, {\rm div}\left(\rho \nabla p \right) = G(p) \rho,
\eeq
which was proposed by Byrne \& Drasdo \cite{byrne2009individual}. Equation~\eqref{eq:pde1} is a conservation equation for the function $\rho(t,x) \geq 0$ that represents the density of cells at position $x\in\mathbb{R}^{d}$, with $d=1,2,3$ in the biologically relevant cases, and time $t \in \mathbb{R}^+$. The function $p$ stands for the cell pressure and the term $G$ is the net growth rate of the cell density. 

The second term on the left-hand side of equation~\eqref{eq:pde1} represents the rate of change of the cellular density due to pressure-driven cell movement (\emph{i.e.}  the tendency of cells to move towards regions where they feel less compressed). The definition of this term builds on the seminal paper of Greenspan~\cite{greenspan1976growth} and subsequent work of Byrne \& Chaplain~\cite{byrne1997free}. In analogy with the classical Darcy's law for fluids, the parameter $\mu > 0$ can be seen as the cell mobility coefficient, which is defined as the quotient between the permeability of the porous medium in which the cells are embedded (\emph{e.g.} the extracellular matrix) and the cellular viscosity. 

The term on the right-hand side of equation~\eqref{eq:pde1} represents the rate of change of the cellular density due to cell proliferation (\emph{i.e.} cell division and cell death). In the mathematical framework of equation~\eqref{eq:pde1}, the effect of pressure-limited growth can be captured in a first approximation by assuming the net growth rate $G$ to be a smooth function of $p$ that satisfies the following assumptions
\beq
\frac{\rmd G}{\rmd p} < 0, \quad G(P) =0. 
\label{as:G}
\eeq
In assumptions~\eqref{as:G}, the parameter $P > 0$ stands for the pressure at which cell death exactly compensates cell division. The term homeostatic pressure has been coined to indicate such a critical pressure~\cite{basan2009homeostatic}. 

In order to close equation~\eqref{eq:pde1} the pressure $p$ can be defined according to a barotropic relation $p \equiv \Pi(\rho)$. Typically, the function $\Pi(\rho)$ is identically zero for $\rho~\leq~ \rho^*$ and is monotonically increasing for $\rho > \rho^*$, with $0<\rho^*<\Pi^{-1}(P)$ being the density below which cells do not exert any force upon one another~\cite{tang2014composite}. For simplicity, in this paper we let $\Pi$ be a smooth function of $\rho$ that satisfies the following assumptions 
\beq
\Pi(0) = 0,  \quad \frac{\rmd \Pi}{\rmd \rho}>0 \; \text{ for } \; \rho>0.
\label{as:p}
\eeq
For instance, in the attempt to reduce the biological problem to its essentials while ensuring analytical tractability of the mathematical model, Perthame {\it et al.}~\cite{perthame2014hele} have proposed the following definition of $\Pi(\rho)$, which satisfies assumptions~\eqref{as:p}:
\beq
\label{def:p}
\Pi(\rho) = K_{\gamma} \, \rho^{\gamma} \quad \text{with} \quad \gamma > 1 \quad \text{and} \quad K_{\gamma}>0.
\eeq
In definition~\eqref{def:p}, the parameter $\gamma$ provides a measure of the stiffness of the barotropic relation and $K_{\gamma}$ is a scale factor. The limit $\gamma \to \infty$ corresponds to the case where cells behave like an incompressible fluid. In this asymptotic regime, it has been proven that models of the form of equation~\eqref{eq:pde1} converge to free-boundary problems of Hele-Shaw type~\cite{kim2018porous,kim2016free,mellet2017hele,perthame2014derivation}.

The model~\eqref{eq:pde1} can be generalised to the case of multiple cell populations with different biophysical properties (\emph{i.e.} different mobilities and growth rates) as follows
\beq
\label{eq:pde2}
\partial_t \rho_h - \mu_h \, {\rm div}\left(\rho_h \nabla p \right) = G_h(p) \rho_h, \quad h=1,\ldots,M.
\eeq
The system of coupled equations~\eqref{eq:pde2} relies on notation and assumptions analogous to those underlying equation~\eqref{eq:pde1}. In particular, the coefficient $\mu_h > 0$ represents the mobility of cells in the $h^{th}$ population and the pressure $p$ is defined by a barotropic relation $p \equiv \Pi(\rho)$. Here $\rho$ stands for the total cell density, \emph{i.e.} 
\beq
\label{def:rho}
\rho(t,x) = \sum_{h=1}^M \rho_h(t,x),
\eeq
and the function $\Pi$ satisfies conditions~\eqref{as:p}. Moreover, the net growth rate of the cell density $G_h(p)$ can be assumed to be a smooth function of the cell pressure that satisfies assumptions~\eqref{as:G} for all $h~=~1,\dots,M$. 

For example, building on the computational results presented by Drasdo \& Hoehme~\cite{drasdo2012modeling}, Lorenzi {\it et al.}~\cite{lorenzi2017interfaces} considered the following variant of the system of equations~\eqref{eq:pde2}
\beq
\label{eq:pde3}
\left\{
\begin{array}{ll}
\partial_t \rho_1 - \mu_1 \, {\rm div}\left(\rho_1 \nabla p \right) = G(p) \rho_1, 
\\\\
\partial_t \rho_2 - \mu_2 \, {\rm div}\left(\rho_2 \nabla p \right) = 0,
\end{array}
\right.
\eeq
complemented with the barotropic relation~\eqref{def:p}. The system of equations~\eqref{eq:pde3} describes the interaction dynamics between a population of proliferating cells (\emph{i.e.} population $1$) and a population of nonproliferating cells (\emph{i.e.} population $2$) with different mobilities.

Further to the biological and clinical insights into the underpinnings of tissue development and tumour growth they can provide, these continuum models exhibit a range of interesting qualitative behaviours. For instance, it was shown that models in the form of equation~\eqref{eq:pde1} admit travelling-wave solutions with composite shapes and discontinuities~\cite{tang2014composite}. Moreover, in analogy with reaction-diffusion systems arising in the mathematical modelling of other biological and ecological problems~\cite{dancer1999spatial,mimura2000reaction}, models like the system of equations~\eqref{eq:pde3} can give rise to sharp interfaces, which bring about spatial segregation between cell populations with different biophysical properties~\cite{lorenzi2017interfaces}. 

\paragraph{Derivation of continuum models of growing cell populations from individual-based models.} A key advantage of continuum models for the spatial dynamics of growing cell populations over their individual-based counterparts (\emph{i.e.} discrete models that track the dynamics of individual cells)~\cite{drasdo2005coarse,van2015simulating} is that they are amenable to mathematical analysis. This enables a complete exploration of the model parameter space, which ultimately allows more robust conclusions to be drawn. Moreover, compared to individual-based models, continuum models offer the possibility to carry out numerical simulations at the level of larger portions of tissues or even of whole organs, while keeping computational costs within acceptable bounds. 

Since continuum models are defined at the scale of whole cell populations, they are usually formulated on the basis of phenomenological considerations, which can hinder a precise mathematical description of crucial biological and physical aspects. On the contrary, stochastic individual-based models that describe the dynamics of single cells in terms of algorithmic rules can be more easily tailored to capture fine details of cellular dynamics, thus making it possible to achieve a more accurate mathematical representation of multicellular systems. Furthermore, individual-based models are able to reproduce the emergence of population-level phenomena that are induced by stochastic fluctuations in single-cell biophysical properties, which are relevant in the regime of low cellular densities and cannot easily be captured by continuum models. Therefore, it is desirable to derive continuum models for the spatial dynamics of cell populations as the appropriate limit of individual-based models for spatial cell movement and proliferation, in order to have a clearer picture of the modelling assumptions that are made and ensure that they correctly reflect the salient features of the underlying application problem.

For this reason, the derivation of continuum models formulated in terms of partial differential equations or partial integrodifferential equations from underlying individual-based models has attracted the attention of a considerable number of mathematicians and physicists. Examples in this active field of research include the derivation of continuum models of chemotaxis from velocity-jump process~\cite{hillen2009user,othmer1988models,othmer2000diffusion,painter2003modelling} or from self-attracting reinforced random walks~\cite{stevens2000derivation,stevens1997aggregation}; the derivation of diffusion and nonlinear diffusion equations from underlying random walks~\cite{champagnat2007invasion,inoue1991derivation,oelschlager1989derivation,othmer2002diffusion,penington2011building,penington2014interacting}, from systems of discrete equations of motion~\cite{oelschlager1990large,murray2009discrete,murray2012classifying}, from discrete lattice-based exclusion processes~\cite{binder2009exclusion,dyson2012macroscopic,fernando2010nonlinear,johnston2017co,johnston2012mean,landman2011myopic,lushnikov2008macroscopic,simpson2010cell} or from cellular automata~\cite{deroulers2009modeling,drasdo2005coarse,simpson2007simulating}; and, most recently, the derivation of nonlocal models of cell-cell adhesion from position-jump processes~\cite{buttenschoen2018space}. However, with the exception of the results presented in~\cite{byrne2009individual,engblom2018scalable,motsch2018short}, little prior work has investigated the relation between single-cell-based models and continuum models in the form of equation~\eqref{eq:pde1} and the system of equations~\eqref{eq:pde2}. In particular, a derivation of these continuum models of growing cell populations from underlying individual-based models of spatial cell movement and proliferation has remained elusive.

\paragraph{Contents of the paper.} In the present paper we aim to bridge such a gap in the existing literature. In Section~\ref{sec:discrete}, we develop a simple stochastic individual-based model for the dynamics of growing cell populations. Our model is based on a branching random walk that takes into account the effects of pressure-driven cell movement and pressure-dependent cell proliferation. In Section~\ref{sec:discreteder}, we show that equation~\eqref{eq:pde1} and the system of equations~\eqref{eq:pde2} can be formally derived from the branching random walk that underlies our discrete model. In Section~\ref{sec:quantcomp}, we carry out a systematic quantitative comparison between the individual-based model and its continuum counterparts, both in the case of one single cell population and in the case of multiple cell populations with different biophysical properties. In summary, we construct travelling-wave solutions both for equation~\eqref{eq:pde1} and for the system of equations~\eqref{eq:pde3} (Section~\ref{sec:twan}), we present numerical solutions that illustrate the results of the travelling-wave analysis, and we compare such numerical solutions with the results of computational simulations of the individual-based model (Section~\ref{sec:numres}). Section~\ref{sec:fin} concludes the paper and provides a brief overview of possible research perspectives.

\section{An individual-based model for growing cell populations}
\label{sec:discrete}
We consider a multicellular system composed of $M$ cell populations. We represent each cell within the system as an agent that occupies a position on a lattice. Cells can move and proliferate according to a set of simple rules that result in a discrete-time branching random walk. For ease of presentation, we let cells be arranged along the real line $\mathbb{R}$, but there would be no additional difficulty in considering the case of branching random walks in higher spatial dimensions. 

We discretise the time variable $t \in \mathbb{R}^+$ and the space variable $x \in \mathbb{R}$ as $t_{k} = k \tau$ with $k\in\mathbb{N}_{0}$ and $x_{i} = i  \chi$ with $i \in \mathbb{Z}$, respectively, where $0<\tau, \chi \ll 1$. We introduce the dependent variable $n^{k}_{hi}\in\mathbb{N}_0$ to model the number of cells of population $h=1, \ldots, M$ on the lattice site $i$ and at the time-step $k$, and we compute the density of cells of population $h$ and the total cell density, respectively, as
\begin{equation}
\label{e:n}
\rho_h(t_k,x_{i})= \rho^k_{hi} = n^{k}_{hi} \,  \chi^{-1} \; \text{ and } \; \rho(t_k,x_{i})= \rho^k_{i} = \sum_{h=1}^M \rho_h(t_k,x_{i}).
\end{equation}
Moreover, for each lattice site $i$ and time-step $k$ we assume the cell pressure $p(t_k,x_{i})= p^k_{i}$ to be given by a barotropic relation $p^k_{i} \equiv \Pi(\rho^k_{i})$ with $\Pi$ being a function of the total cell density that satisfies conditions~\eqref{as:p}. At each time-step, we allow every cell to undergo pressure-dependent proliferation and pressure-driven movement according to the following algorithmic rules, which are schematised in Figure~\ref{Fig1}.

\paragraph{Pressure-dependent cell proliferation.} 
We allow every cell to divide, die or remain quiescent with probabilities that depend on the local pressure, and we assume that a dividing cell is replaced by two identical daughter cells that are placed on the original lattice site of the parent cell. In order to model pressure-limited growth, given the net growth rate $G_h$ that satisfies assumptions~\eqref{as:G} for all values of $h$, we assume that at the $k^{th}$ time-step a focal cell of population $h$ on the lattice site $i$ can divide with probability 
\begin{equation}
\label{e:div}
\tau  \, G_h(p^{k}_{i})_+ \quad \mbox{where} \quad G_h(p^{k}_{i})_+ = \max\left(0,G_h(p^{k}_{i})\right)
\end{equation}
or die with probability 
\begin{equation}
\label{e:apo}
\tau  \, G_h(p^{k}_{i})_-  \quad \mbox{where} \quad G_h(p^{k}_{i})_- = - \min\left(0,G_h(p^{k}_{i})\right)
\end{equation}
or remain quiescent with probability 
\begin{equation}
\label{e:qui}
1 - \left(\tau  \, G_h(p^{k}_{i})_+ + \tau \, G_h(p^{k}_{i})_-\right) = 1 - \tau |G_h(p^{k}_{i})|.
\end{equation}
We assume the time-step $\tau$ to be sufficiently small so that the quantities~\eqref{e:div}-\eqref{e:qui} are all between 0 and 1. Under assumptions~\eqref{as:G}, the definitions \eqref{e:div}-\eqref{e:qui} are such that if $p^{k}_{i} > P$ then every cell on the $i^{th}$ lattice site can only die or remain quiescent at the $k^{th}$ time-step. Therefore, we have that
\begin{equation}
\label{eq:UBdisc} 
p^k_{i} \leq \overline{p} \; \mbox{ for all } \; (k,i) \in \mathbb{N}_0 \times \mathbb{Z}, \; \text{ with } \; \overline{p} = \max \left(\max_{i \in \mathbb{Z}} p^0_i, P \right).
\end{equation}

\paragraph{Pressure-driven cell movement.} We model pressure-driven cell movement (\emph{i.e.} the tendency of cells to move down pressure gradients) as a biased random walk whereby the movement probabilities depend on the difference between the pressure at the site occupied by a cell and the pressure at the neighbouring sites. In particular, for a focal cell of population $h$ on the lattice site $i$ at the time-step $k$, we define the probability of moving to the lattice site $i-1$ (\emph{i.e.} the probability of moving left) as
\begin{equation}
\label{e:left}
J^{L}_h(p^{k}_{i}-p^{k}_{i-1}) =  \nu_h \frac{(p^{k}_{i}-p^{k}_{i-1})_{+}}{2 \, \overline{p}}, 
\end{equation}
where $(p^{k}_{i}-p^{k}_{i-1})_{+} = \max\left(0,p^{k}_{i}-p^{k}_{i-1}\right)$, the probability of moving to the lattice site $i+1$ (\emph{i.e.} the probability of moving right) as
\begin{equation}
\label{e:right}
J^{R}_h(p^{k}_{i}-p^{k}_{i+1}) = \nu_h \frac{(p^{k}_{i}-p^{k}_{i+1})_{+}}{2 \, \overline{p}},  
\end{equation}
where $(p^{k}_{i}-p^{k}_{i+1})_{+} = \max\left(0,p^{k}_{i}-p^{k}_{i+1}\right)$, and the probability of remaining stationary on the lattice site $i$ as 
\begin{equation}
\label{e:stay}
1 - J^{L}_h(p^{k}_{i}-p^{k}_{i-1}) - J^{R}_h(p^{k}_{i}-p^{k}_{i+1}).
\end{equation}
In the above equations, the coefficient $0 < \nu_h \leq 1$ is directly proportional to the mobility of cells in population $h$ and the parameter $\overline{p}$ is defined in~\eqref{eq:UBdisc}. Notice that the definitions~\eqref{e:left}-\eqref{e:stay} are such that the cells will move down pressure gradients. Moreover, the a priori estimate~\eqref{eq:UBdisc} ensures that the quantities defined according to~\eqref{e:left}-\eqref{e:stay} are all between 0 and 1. 

\begin{figure}
\centering
\includegraphics[width=1\textwidth]{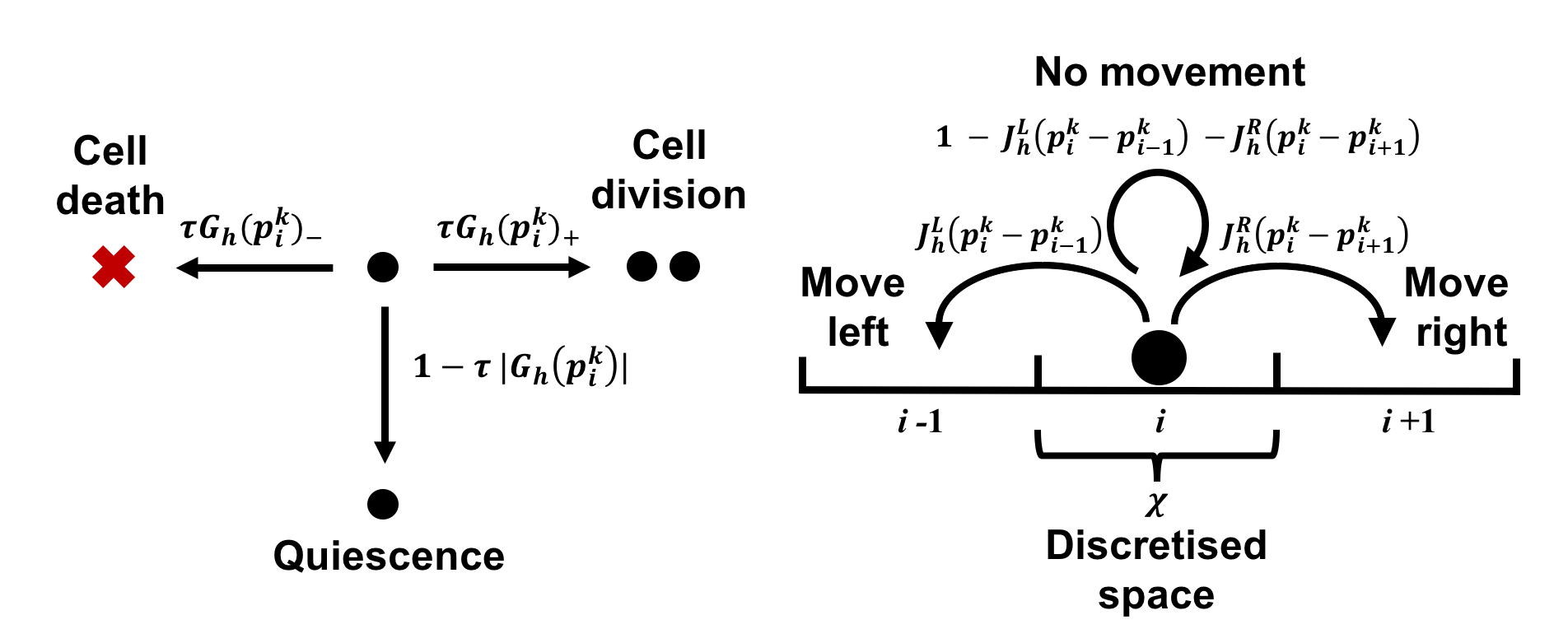}
\caption{Schematic representation of the algorithmic rules governing cell dynamics in our stochastic individual-based model. Pressure-dependent cell proliferation is modelled by letting the probabilities of a cell dividing, dying and remaining quiescent depend on the pressure at the site occupied by the cell (left panel). Pressure-driven cell movement is modelled by letting the movement probabilities depend on the difference between the pressure at the site occupied by a cell and the pressure at the neighbouring sites (right panel)}
\label{Fig1}
\end{figure}

\section{Formal derivation of continuum models}
\label{sec:discreteder}
In this section, we show how continuum models of growing cell populations of the form of equation \eqref{eq:pde1} and of the system of equations \eqref{eq:pde2} and \eqref{eq:pde3} can be derived as formal limits of the branching random walk that underlies our stochastic individual-based model. 

For a multicellular system the dynamic of which is governed by the algorithmic rules presented in Section \ref{sec:discrete}, the principle of mass balance gives 
\begin{eqnarray*}
\rho^{k+1}_{hi}  &=& \nu_h  \frac{(p^{k}_{i-1}-p^{k}_{i})_{+}}{2 \, \overline{p}} \left[2  \tau  G_h(p^{k}_{i-1})_+ + \left(1 - \tau  |G_h(p^{k}_{i-1})| \right) \right]  \rho^{k}_{h \, i-1}  \\
&& + \nu_h  \frac{(p^{k}_{i+1}-p^{k}_{i})_{+}}{2 \, \overline{p}} \left[2  \tau  G_h(p^{k}_{i+1})_+ + \left(1 - \tau  |G_h(p^{k}_{i+1})| \right) \right]  \rho^{k}_{h \, i+1} \\
&& + \left[1 - \nu_h  \frac{(p^{k}_{i}-p^{k}_{i-1})_{+}}{2 \, \overline{p}} - \nu_h  \frac{(p^{k}_{i}-p^{k}_{i+1})_{+}}{2 \, \overline{p}}\right] \\
&& \phantom{[1 - \nu_h  \frac{(p^{k}_{i}-p^{k}_{i-1})_{+}}{P} -aaaa} \times \Big[2  \tau  G_h(p^{k}_{i})_+ + \left(1 - \tau  |G_h(p^{k}_{i})| \right) \Big]  \rho^{k}_{hi}
\end{eqnarray*}
and after a little algebra we find
\begin{eqnarray*}
\rho^{k+1}_{hi}  &=& \nu_h  \frac{(p^{k}_{i-1}-p^{k}_{i})_{+}}{2 \, \overline{p}} \left(\tau  G_h(p^{k}_{i-1}) + 1 \right)  \rho^{k}_{h \, i-1} \\
&&+ \nu_h  \frac{(p^{k}_{i+1}-p^{k}_{i})_{+}}{2 \, \overline{p}} \left(\tau  G_h(p^{k}_{i+1}) + 1 \right)  \rho^{k}_{h \, i+1} \\
&& + \left[1 - \nu_h  \frac{(p^{k}_{i}-p^{k}_{i-1})_{+}}{2 \, \overline{p}} - \nu_h  \frac{(p^{k}_{i}-p^{k}_{i+1})_{+}}{2 \, \overline{p}}\right]  \left(\tau  G_h(p^{k}_{i}) + 1 \right)  \rho^{k}_{hi}.
\end{eqnarray*}
The above equation simplifies to
\begin{eqnarray}
\label{eq:intera1}
\rho^{k+1}_{hi} - \rho^{k}_{hi} &=& \tau  G_h(p^{k}_{i})  \rho^{k}_{hi} \nonumber \\
&& +\frac{\nu_h}{2 \, \overline{p}} \left[\rho^{k}_{h \, i-1} (p^{k}_{i-1}-p^{k}_{i})_{+} + \rho^{k}_{h \, i+1} (p^{k}_{i+1}-p^{k}_{i})_{+} \right] \nonumber \\
&& - \frac{\nu_h}{2 \, \overline{p}}\left[\rho^{k}_{hi} (p^{k}_{i}-p^{k}_{i-1})_{+}  + \rho^{k}_{hi} (p^{k}_{i}-p^{k}_{i+1})_{+}\right] \nonumber\\
&& + \frac{ \nu_h \tau}{2 \, \overline{p}} \left[\rho^{k}_{h \, i-1} G_h(p^{k}_{i-1}) (p^{k}_{i-1}-p^{k}_{i})_{+}  + \rho^{k}_{h \, i+1} G_h(p^{k}_{i+1}) (p^{k}_{i+1}-p^{k}_{i})_{+}\right]\nonumber\\
&&  -  \frac{ \nu_h \tau}{2 \, \overline{p}}  \left[\rho^{k}_{hi} G_h(p^{k}_{i}) (p^{k}_{i}-p^{k}_{i-1})_{+} + \rho^{k}_{hi} G_h(p^{k}_{i}) (p^{k}_{i}-p^{k}_{i+1})_{+} \right].
\end{eqnarray}
Using the fact that the following relations hold for $\tau$ and $\chi$ sufficiently small
$$
t _k \approx t, \quad t _{k+1} \approx t + \tau, \quad x_{i} \approx x, \quad x_{i \pm 1} \approx x \pm \chi,
$$
$$
\rho^k_{h i} \approx \rho_h(t,x), \quad \rho^{k+1}_{h i} \approx \rho_h(t+\tau,x), \quad \rho^k_{h \, i \pm 1} \approx \rho_h(t,x \pm \chi),
$$
$$
p^k_i \approx p(t,x), \quad p^k_{i \pm 1} \approx p(t,x \pm \chi),
$$
we rewrite equation \eqref{eq:intera1} in the approximate form
\begin{eqnarray}
\label{eq:interb1}
\rho_h(t+\tau,x) - \rho_h(t,x) &\approx& \tau  G_h(p(t,x)) \rho_h(t,x) \nonumber\\
&& +\frac{\nu_h}{2 \, \overline{p}} \left[\rho_h(t,x-\chi) (p(t,x-\chi)-p(t,x))_{+} \right] \nonumber\\
&& +\frac{\nu_h}{2 \, \overline{p}} \left[ \rho_h(t,x+\chi) (p(t,x+\chi)-p(t,x))_{+}\right] \nonumber\\
&& - \frac{\nu_h}{2 \, \overline{p}}\left[\rho_h(t,x) (p(t,x)-p(t,x-\chi))_{+}\right] \nonumber\\  
&& - \frac{\nu_h}{2 \, \overline{p}}\left[\rho_h(t,x) (p(t,x)-p(t,x+\chi))_{+} \right]\nonumber\\
&& + \frac{ \nu_h \tau}{2 \, \overline{p}} \left[\rho_h(t,x-\chi) G_h(p(t,x-\chi)) (p(t,x-\chi)-p(t,x))_{+} \right]\nonumber\\
&&  + \frac{ \nu_h \tau}{2 \, \overline{p}} \left[\rho_h(t,x+\chi) G_h(p(t,x+\chi)) (p(t,x+\chi)-p(t,x))_{+} \right]\nonumber\\
&&  -  \frac{ \nu_h \tau}{2 \, \overline{p}}  \left[\rho_h(t,x)  G_h(p(t,x)) (p(t,x)-p(t,x-\chi))_{+} \right]\nonumber\\
 &&  -  \frac{ \nu_h \tau}{2 \, \overline{p}}  \left[\rho_h(t,x) G_h(p(t,x)) (p(t,x)-p(t,x+\chi))_{+}\right].
\end{eqnarray}
Assuming
\beq
\label{assC2}
\rho_h \in C^2(\mathbb{R}^+ \times \, \mathbb{R}), \quad h=1, \ldots, M,
\eeq  
we approximate the terms $\rho_h(t+\tau,x)$, $\rho_h(t,x-\chi)$ and $\rho_h(t,x+\chi)$ in equation~\eqref{eq:interb1} by their second order Taylor expansions about the point $(t,x)$. Moreover, since $p \equiv \Pi(\rho)$ and $\Pi$ is a smooth function of $\rho$, under assumption~\eqref{assC2} the pressure $p(t,x)$ is twice continuously differentiable with respect to the variable $x$. Hence we also approximate the terms $p(t,x-\chi)$ and $p(t,x+\chi)$ in equation \eqref{eq:interb1} by their second order Taylor expansions about the point $(t,x)$. In so doing, after a little algebra we find
\begin{eqnarray*}
\tau \, \partial_{t}\rho_h(t,x)+\frac{\tau^2}{2}\partial^2_{tt}\rho_h(t,x) &\approx& \tau  G_h(p(t,x)) \rho_h(t,x) + \frac{\nu_h \chi^{2}}{2 \, \overline{p}} \rho_h(t,x) \partial^2_{xx} p(t,x)\nonumber\\
&& + \frac{\nu_h \chi^2}{2 \, \overline{p}} \left[ \left(\partial_{x}p(t,x)\right)_{+}-\left(-\partial_{x}p(t,x)\right)_{+}\right] \partial_{x} \rho_h(t,x) \nonumber\\
&& + \frac{ \nu_h \tau}{2 \, \overline{p}} \rho_h(t,x) G_h(p(t,x-\chi)) \left(-\chi \partial_{x}p(t,x)\right)_{+} \nonumber\\
&& + \frac{ \nu_h \tau}{2 \, \overline{p}} \rho_h(t,x) G_h(p(t,x+\chi)) \left(\chi \partial_{x}p(t,x)\right)_{+} \nonumber\\
&&- \frac{ \nu_h \tau}{2 \, \overline{p}} \rho_h(t,x) G_h(p(t,x)) \left(\chi\partial_{x}p(t,x)\right)_{+} \nonumber\\
&&- \frac{ \nu_h \tau}{2 \, \overline{p}} \rho_h(t,x) G_h(p(t,x))\left(-\chi\partial_{x}p(t,x)\right)_{+},
\end{eqnarray*}
which implies
 \begin{eqnarray}
\label{equation7a}
\tau \, \partial_{t}\rho_h(t,x)+\frac{\tau^2}{2}\partial^2_{tt}\rho_h(t,x) &\approx& \tau  G_h(p(t,x)) \rho_h(t,x) 
\nonumber \\
&& + \frac{\nu_h \chi^{2}}{2 \, \overline{p}} \left(\rho_h(t,x) \partial^2_{xx}p(t,x) + \partial_{x}\rho_h(t,x) \partial_{x}p(t,x) \right) 
\nonumber \\
&& + \frac{\nu_h\tau\chi}{2 \, \overline{p}} \, F(t,x), 
\end{eqnarray}
with
 \begin{eqnarray*}
F(t,x) &=& \left[G_h(p(t,x-\chi)) \left(-\partial_{x}p(t,x)\right)_{+} + G_h(p(t,x+\chi))\left(\partial_{x}p(t,x)\right)_{+} \right]\rho_h(t,x) 
\nonumber \\
&& - \left[\left(\partial_{x}p(t,x)\right)_{+} + \left(-\partial_{x}p(t,x)\right)_{+} \right] G_h(p(t,x)) \rho_h(t,x).
 \end{eqnarray*}
Dividing both sides of the resulting equation by $\tau$ we obtain 
 \begin{eqnarray}
\label{equation7}
\partial_{t}\rho_h(t,x)+\frac{\tau}{2}\partial^2_{tt}\rho_h(t,x)  &\approx& G_h(p(t,x)) \rho_h(t,x) 
\nonumber\\
&& + \frac{\nu_h \chi^{2}}{2 \, \overline{p} \tau} \left(\rho_h(t,x) \partial^2_{xx}p(t,x) + \partial_{x}\rho_h(t,x) \partial_{x}p(t,x) \right)\nonumber\\
&& + \frac{ \nu_h\chi}{2 \, \overline{p}} \, F(t,x).
\end{eqnarray}
Letting both $\tau \to 0$ and $\chi \to 0$ in such a way that
\beq
\label{asymat}
\frac{\nu_h \chi^{2}}{2 \, \overline{p} \, \tau} \rightarrow \mu_h \quad \text{as } \; \tau \to 0 \; \text{ and } \; \chi \to 0, \quad \text{for} \quad h=1,\ldots,M,
\eeq
from~\eqref{equation7} we formally obtain the following system of coupled conservation equations
$$
\partial_{t}\rho_h=  G_h(p)  \rho_h + \mu  \left(\rho_h \, \partial^2_{xx} p_h + \partial_{x}\rho_h \,  \partial_{x}p_h \right), \quad h=1,\ldots,M,
$$
which can be rewritten as the system of equations~\eqref{eq:pde2}, that is, 
\beq
\label{equation8}
\partial_{t}\rho_h - \mu_h \, \partial_x \left(\rho_h \, \partial_{x}p \right) = G_h(p)  \rho_h, \quad h=1,\ldots,M.
\eeq
In a similar way, in the case of one single cell population, letting $M=1$ and dropping the index $h$ we formally obtain equation~\eqref{eq:pde1}. Moreover, we can formally obtain the system of equations~\eqref{eq:pde3} by choosing $M=2$, labelling the two populations by $h=1$ and $h=2$, and setting $G_1 \equiv G$ and $G_2 \equiv 0$.

\begin{remark}
We note that condition~\eqref{asymat} is a natural counterpart of the usual parabolic scaling of Brownian motion. Hence, our formal derivation does not impose any additional assumptions than those commonly employed for the asymptotic investigation of random walks.
\end{remark}

\section{Comparison between individual-based and continuum models}
\label{sec:quantcomp}
In this section, we carry out a systematic quantitative comparison between the outcomes of our individual-based model, both in the case of one cell population and in the case of two cell populations, and the solutions of the corresponding continuum models. We first establish the existence of travelling-wave solutions for the continuum models~\eqref{eq:pde1} and~\eqref{eq:pde3} (Section~\ref{sec:twan}). We then construct numerical solutions of the model equations which illustrate the results of the travelling-wave analysis and we compare such numerical solutions with the results of computational simulations of our individual-based model (Section~\ref{sec:numres}). 

\subsection{Travelling-wave analysis of the continuum models}
\label{sec:twan}
We first consider the continuum model~\eqref{eq:pde1} and we look for one-dimensional travelling-wave solutions of the form 
$$
\rho(t,x)=\rho(z) \quad \text{with} \quad z=x-ct \quad \text{and} \quad c>0
$$
that satisfy the following asymptotic conditions
\beq
\label{ass:TW1}
\rho(z) \; \xrightarrow[z  \rightarrow -\infty]{} \Pi^{-1}(P) \quad \text{and} \quad \rho(z) \; \xrightarrow[z  \rightarrow \infty]{} 0.
\eeq 
Therefore, we study the existence of pairs $(\rho,c)$ that satisfy the problem defined by the differential equation 
\beq
\label{eq:TW1}
- c \, \rho' - \mu \left(\rho \, p' \right)' = G(p) \, \rho
\eeq 
subject to conditions~\eqref{ass:TW1}. Our main results are summarised by the following theorem.
\begin{theorem}
\label{th:TW1}
Under assumptions~\eqref{as:G} and \eqref{as:p}, there exists $c>0$ such that the travelling-wave problem defined by the differential equation \eqref{eq:TW1} subject to conditions~\eqref{ass:TW1} admits a nonnegative and nonincreasing solution $\rho$.
\end{theorem}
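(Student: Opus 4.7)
The plan is to recast the second-order ODE \eqref{eq:TW1} as a first-order planar system and then to use a shooting argument in the wave speed $c$. Introducing the flux
\[
J(z) := c\,\rho(z) + \mu\,\rho(z)\,p'(z),
\]
one finds $J' = -G(\Pi(\rho))\,\rho$ directly from \eqref{eq:TW1}, while $p = \Pi(\rho)$ allows one to solve for $\rho'$ on $\{\rho>0\}$ and obtain the autonomous system
\[
\rho' = \frac{J - c\,\rho}{\mu\,\rho\,\Pi'(\rho)}, \qquad J' = -G(\Pi(\rho))\,\rho.
\]
In these variables, the conditions \eqref{ass:TW1} translate into a heteroclinic connection $(\rho,J) \to (\rho_*, c\rho_*)$ as $z\to-\infty$, with $\rho_* := \Pi^{-1}(P)$ (so that $G(P)=0$), and $(\rho,J)\to(0,0)$ as $z\to+\infty$.

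First I would linearise at $(\rho_*, c\rho_*)$. A direct computation using \eqref{as:G}-\eqref{as:p} yields a Jacobian with trace $-c/(\mu\,\rho_*\,\Pi'(\rho_*))<0$ and determinant $G'(P)/\mu<0$, so the equilibrium is a hyperbolic saddle for every $c>0$, with a one-dimensional unstable manifold $\mathcal{W}^{\mathrm{u}}_c$ entering the interior of the triangle $\Delta := \{(\rho,J) : 0 \le J \le c\rho,\; 0 \le \rho \le \rho_*\}$ along an explicit eigenvector. Within $\Delta$ one has $\rho' \le 0$ and $J' \le 0$, so the trajectory along $\mathcal{W}^{\mathrm{u}}_c$ is monotonically decreasing in both components as long as it stays in $\Delta$. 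The hypotenuse $\{J=c\rho\}$ and the right edge $\{\rho=\rho_*\}$ act as inward one-sided barriers for $0<\rho<\rho_*$, so the only ways for the trajectory to leave $\Delta$ are to cross the bottom edge $\{J=0,\rho>0\}$ in finite $z$ or to accumulate at $(0,0)$ as $z\to+\infty$.

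The core of the proof is a shooting argument on $c$. By standard continuous dependence on parameters, $\mathcal{W}^{\mathrm{u}}_c$ varies continuously with $c$; I would then show that for $c$ small the trajectory exits through the bottom edge (the source $-G(\Pi(\rho))\,\rho$ drains $J$ before $\rho$ reaches $0$), while for $c$ large the trajectory stays bounded away from $\{J=0\}$ and accumulates at the origin (the advective term $-c\,\rho$ depletes $\rho$ first). An intermediate-value argument applied to the exit density — the value of $\rho$ at which $J$ first vanishes, set to $0$ in the large-$c$ regime — then produces a critical speed $c^*>0$ for which $\mathcal{W}^{\mathrm{u}}_{c^*}$ terminates exactly at $(0,0)$. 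The corresponding $\rho$ is nonnegative, nonincreasing, and satisfies \eqref{eq:TW1} on its support, extended by zero beyond if that support is compact. Strict positivity of $c^*$ follows a posteriori from the identity
\[
c^*\,\rho_* = \int_{\mathbb{R}} G(\Pi(\rho(z)))\,\rho(z)\,dz > 0,
\]
obtained by integrating $J'$ from $-\infty$ to $+\infty$.

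The main obstacle is the degeneracy of the $\rho'$ equation at $\rho=0$, where $\mu\,\rho\,\Pi'(\rho)$ vanishes and $(0,0)$ is not a regular equilibrium of the planar system. I would resolve this by a Poincaré change of independent variable $d\zeta = dz/(\mu\,\rho\,\Pi'(\rho))$, valid where $\rho>0$, which turns $(0,0)$ into an equilibrium of a regularised planar system whose linearisation controls how the trajectory approaches the origin — in particular, whether the wave has compact support or an infinite tail — and legitimises the continuous dependence of the exit behaviour in $c$ that is needed in the shooting step.
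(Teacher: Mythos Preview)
Your phase-plane/shooting argument is a sound and classical route to existence for this type of degenerate travelling-wave problem, and the details you give (saddle structure at $(\rho_*,c\rho_*)$, invariance of the triangle $\Delta$, the dichotomy between exit through $\{J=0\}$ and accumulation at the origin, desingularisation near $\rho=0$) are all correct as stated.

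The paper, however, takes a rather different line. It works directly with the pressure, rewriting \eqref{eq:TW1} as a second-order equation for $p$ and arguing via the strong maximum principle: at any critical point $z^*$ one has $p''(z^*)=-G(p(z^*))/\mu$, which together with the boundary conditions forces $p<P$ and rules out interior minima, giving $p'<0$ throughout. Uniqueness of $c$ is then obtained by showing (by differentiating the ODE with respect to $c$ and comparing with the $z$-derivative, as in Step~5 of Theorem~\ref{th:TW2}) that $p$ is strictly monotone in $c$. This maximum-principle argument is shorter and yields monotonicity of the profile almost immediately, but it treats existence only implicitly: it characterises the properties any solution must have and pins down $c$ by monotonicity, without an explicit construction. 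Your shooting argument is more constructive and handles the degeneracy at the front explicitly; the paper's approach trades that rigour for brevity and a more PDE-flavoured monotonicity-in-$c$ mechanism in place of your intermediate-value step.
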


\begin{proof}
We divide the proof of Theorem~\ref{th:TW1} into two steps. First we prove that for $c>0$ fixed the differential equation~\eqref{eq:TW1} subject to the asymptotic conditions~\eqref{ass:TW1} admits a nonnegative and nonincreasing solution $\rho$ (Step~1). Then we show that there exists a unique value of the wave speed $c$ that satisfies the travelling-wave problem (Step~2). 
\\\\
{\it Step 1.} Multiplying both sides of the differential equation~\eqref{eq:TW1} by $\displaystyle{\frac{\rmd p}{\rmd \rho}}$ we obtain the following boundary-value problem for $p$ 
\beq
\label{e.P-inf0FB1}
- p' \, \left(c + \mu \, p' \right) - \mu \, p'' \, \rho \, \frac{\rmd p}{\rmd \rho} = G\left(p\right) \rho \frac{\rmd p}{\rmd \rho},
\eeq
\beq
\label{e.P-inf0BCFB1}
p(z) \xrightarrow[ z \rightarrow - \infty]{} P \quad \mbox{and} \quad p(z) \xrightarrow[ z \rightarrow \infty]{} 0.
\eeq
Let $z^*$ be a critical point of $p$ in $\mathbb{R}$. Using the differential equation \eqref{e.P-inf0FB1} we see that 
$$
p''(z^*) = - G\left(p(z^*)\right)
$$
and, under assumptions~\eqref{as:G} and conditions~\eqref{e.P-inf0BCFB1}, using the strong maximum principle we conclude that $p < P$ in $\mathbb{R}$ and that $p$ cannot have a local minimum in $\mathbb{R}$, \emph{i.e.}
\beq
\label{e.Pon-inf0FB}
p'(z) < 0 \; \mbox{ for all } \; z \in \mathbb{R}.
\eeq 
Hence the solution $p$ of the differential equation~\eqref{e.P-inf0FB1} subject to conditions~\eqref{e.P-inf0BCFB1} is a nonnegative and nonincreasing function that satisfies 
\beq
\label{e.Pon-inf0FB1}
0 <  p(z) < P \; \mbox{ for all } \; z \in \mathbb{R}.
\eeq 
Since $p \equiv \Pi(\rho)$ and $\Pi$ is a smooth monotonically increasing function of $\rho$, we can then conclude that the cell density $\rho$ is a nonnegative and nonincreasing function as well. 
\\\\
{\it Step 2.} Using a method similar to that used in {\it Step 5} of the proof of Theorem~\ref{th:TW2} one can prove that $p$ is a monotonically decreasing function of the parameter $c$. This ensures that given the pressure $p$ or, equivalently, the cell density $\rho$ the wave speed $c$ can be uniquely identified through a monotonicity argument.
\qed
\end{proof}

We then turn to the travelling-wave analysis of the system of equations~\eqref{eq:pde3}. On the basis of the results presented in~\cite{lorenzi2017interfaces}, we consider one-dimensional travelling-wave solutions of the form 
$$
\rho_1(t,x)=\rho_1(z) \quad \text{and} \quad \rho_2(t,x)=\rho_2(z), \quad \text{with} \quad z=x-ct \quad \text{and} \quad c>0,
$$
that satisfy the following conditions
\beq
\label{ass:TW2a}
\rho_1(z)\left\{
\begin{array}{ll}
> 0, \quad \text{for } z < 0,
\\\\
= 0 , \quad \text{for } z \geq 0,
\end{array}
\right.
\qquad
\rho_2(z)\left\{
\begin{array}{ll}
= 0 , \quad \text{for } z < 0,
\\\\
> 0, \quad \text{for } z \in [0, \ell),
\\\\
= 0 , \quad \text{for } z \geq \ell,
\end{array}
\right.
\eeq
for some $\ell > 0$, along with the asymptotic condition 
\beq
\label{ass:TW2b}
\rho_1(z) \; \xrightarrow[z  \rightarrow -\infty]{} \Pi^{-1}(P).
\eeq
Hence, we study the existence of triples $(\rho_1,\rho_2,c)$ along with $\ell > 0$ that satisfy the problem defined by the following system of differential equations
\beq
\label{eq:TW2}
\left\{
\begin{array}{ll}
-c \, \rho'_1 - \mu_1 \left(\rho_1 \, p' \right)' = G(p) \rho_1, 
\\\\
-c \, \rho'_2 - \mu_2 \left(\rho_2 \, p' \right)' = 0,
\end{array}
\right.
\eeq
subject to conditions~\eqref{ass:TW2a} and~\eqref{ass:TW2b}. Notice that the principle of mass conservation gives
\beq
\label{eq:mascon}
\int_0^{\ell} \rho_2(z) \; {\rmd}z= N_2
\eeq 
for some $N_2>0$ that represents the number of cells in population $2$. Our main results are summarised by the following theorem.

\begin{theorem}
\label{th:TW2}
Under assumptions~\eqref{as:G} and \eqref{as:p}, for any $N_2>0$ given, there exists $c>0$ and $\ell>0$ such that the system of differential equations \eqref{eq:TW2} subject to conditions~\eqref{ass:TW2a} and \eqref{ass:TW2b} admits a component-wise nonnegative solution $(\rho_1,\rho_2)$ with $\rho_1$ nonincreasing, and $\rho_2$ nonincreasing and satisfying condition~\eqref{eq:mascon}. Moreover, the pressure $p$ has a kink in $z=0$ with 
\beq
\label{eq:ppjump}
\sgn(p'(0^+) - p'(0^-)) =\sgn(\mu_2 - \mu_1).
\eeq 
\end{theorem}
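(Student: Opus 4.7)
The plan is to decouple the problem into two subproblems, one on $(-\infty,0)$ and one on $(0,\ell)$, and to stitch them together by matching conditions at the interface $z=0$. On $z\in(0,\ell)$ only $\rho_2$ is present, so the second equation in~\eqref{eq:TW2} can be recast in divergence form as $\bigl(-c\rho_2-\mu_2\rho_2\,p'\bigr)'=0$. Since $\rho_2(\ell^-)=0$, the constant of integration vanishes and wherever $\rho_2>0$ the algebraic identity $p'(z)=-c/\mu_2$ must hold. Combined with the free-boundary value $p(\ell^-)=\Pi(0)=0$, this yields the explicit linear pressure profile $p(z)=c(\ell-z)/\mu_2$ and density $\rho_2(z)=\Pi^{-1}\bigl(c(\ell-z)/\mu_2\bigr)$, which is nonnegative and nonincreasing by~\eqref{as:p}. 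Setting $p^*:=p(0^+)=c\ell/\mu_2$, the change of variables $u=c(\ell-z)/\mu_2$ reduces~\eqref{eq:mascon} to the scalar relation
\begin{equation*}
N_2=\frac{\mu_2}{c}\int_0^{p^*}\Pi^{-1}(u)\,\rmd u,
\end{equation*}
which, thanks to~\eqref{as:p}, is strictly monotone in $p^*$ and determines $c$ (and hence $\ell=\mu_2 p^*/c$) once $p^*$ is known.

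On the half-line $z<0$ only $\rho_1$ is present, and the first equation in~\eqref{eq:TW2} coincides structurally with~\eqref{eq:TW1}. The argument from Step~1 of the proof of Theorem~\ref{th:TW1}, based on the strong maximum principle applied to $p\equiv\Pi(\rho_1)$, transfers to the half-line with the Dirichlet datum $p(-\infty)=P$ and produces a nonincreasing pressure on $(-\infty,0)$ and hence a nonincreasing $\rho_1=\Pi^{-1}(p)$. The coupling at $z=0$ rests on the continuity of $p$ (encoded in the definition of $p^*$) together with continuity of the total mass flux $J_{\mathrm{tot}}:=-c\rho-(\mu_1\rho_1+\mu_2\rho_2)\,p'$, obtained by summing the two equations of~\eqref{eq:TW2} and observing that the resulting right-hand side $G(p)\rho_1$ is bounded. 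Pressure continuity and monotonicity of $\Pi$ give $\rho_1(0^-)=\rho_2(0^+)=\Pi^{-1}(p^*)>0$, so flux continuity collapses to
\begin{equation*}
\mu_1\,p'(0^-)=\mu_2\,p'(0^+)=-c,
\end{equation*}
which in particular furnishes a Neumann condition at $z=0$ for the half-line problem.

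Existence and uniqueness of the triple $(c,\ell,p^*)$ then follow from a monotonicity argument analogous to the one alluded to at the end of the proof of Theorem~\ref{th:TW1}: for each $c>0$, the boundary-value problem on $(-\infty,0)$ with $p(-\infty)=P$ and $p'(0^-)=-c/\mu_1$ selects a unique value $p^*(c)$, and the map $c\mapsto p^*(c)$ is strictly monotone; combining this with the strictly monotone relation $p^*\mapsto c=\mu_2\,\mathcal{N}(p^*)/N_2$ (with $\mathcal{N}(p^*):=\int_0^{p^*}\Pi^{-1}$) arising from mass conservation yields, via an intermediate-value argument, a unique admissible triple. The kink identity~\eqref{eq:ppjump} is then immediate:
\begin{equation*}
p'(0^+)-p'(0^-)=-\frac{c}{\mu_2}+\frac{c}{\mu_1}=\frac{c(\mu_2-\mu_1)}{\mu_1\mu_2},
\end{equation*}
which has the sign of $\mu_2-\mu_1$ since $c,\mu_1,\mu_2>0$. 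The main obstacle is precisely this last monotonicity step: verifying that the map $c\mapsto p^*(c)$ produced by the nonlinear half-line problem with mixed Dirichlet/Neumann data is strictly monotone in a direction opposite to the mass-conservation relation, so that the fixed-point equation for $c$ admits a unique solution.
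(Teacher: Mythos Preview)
Your proposal is correct and follows essentially the same architecture as the paper's proof: explicit integration on $(0,\ell)$ giving $p'=-c/\mu_2$ and the linear pressure profile, the maximum-principle argument from Theorem~\ref{th:TW1} on $(-\infty,0)$, flux-matching at the interface to obtain $\mu_1\,p'(0^-)=\mu_2\,p'(0^+)=-c$ and hence the kink formula, and a monotonicity-in-$c$ argument to fix the wave speed. The only notable difference is that the paper devotes a separate step to \emph{proving} pressure continuity at $z=0$ via an $L^2_{\mathrm{loc}}$ estimate on $p'$ derived from the summed equation, whereas you impose continuity as a matching condition in your construction; your choice is adequate for the existence statement, while the paper's version additionally shows that continuity is forced rather than assumed.
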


\begin{proof}
Building upon the method of proof presented by Lorenzi {\it et al}~\cite{lorenzi2017interfaces} for the case of the barotropic relation~\eqref{def:p}, we prove Theorem~\ref{th:TW2} in five steps. We fix the parameter $c>0$ and first prove that, for $N_2>0$ given, the problem under study admits a component-wise nonnegative solution $(\rho_1,\rho_2)$ with $\rho_2$ nonincreasing and with the value of $\ell$ being determined by condition~\eqref{eq:mascon} (Step 1), and with $\rho_1$ nonincreasing (Step 2). Then we prove that the total cell density $\rho$ is continuous on $(-\infty, \ell)$ (Step 3) and the jump condition~\eqref{eq:ppjump} holds (Step 4). Finally, we show that there exists a unique value of the wave speed $c$ that satisfies the travelling-wave problem (Step 5). 
\\\\
{\it Step 1.} Integrating the differential equation \eqref{eq:TW2}$_2$ between a generic point $z \in [0,\ell]$ and $\infty$, and using the fact that both $p'(z) \to 0$ and $\rho_2(z) \to 0$ as $z \to \infty$, we find
\beq
\label{e.Pon0rFB22}
p'(z) = - \frac{c}{\mu_2} < 0 \; \mbox{ for all } \; z \in [0,\ell].
\eeq
Integrating equation \eqref{e.Pon0rFB22} between a generic point $z \in [0,\ell)$ and $\ell$, and using the fact that $p(\ell)=0$, gives
\beq
\label{e:PeqFB}
p(z) = \frac{c}{\mu_2} \, (\ell - z) \; \mbox{ for } \; z \in [0,\ell],
\eeq
which implies that
\beq
\label{e.Pon0rFB12fm}
p(0) = \frac{c \, \ell}{\mu_2}.
\eeq
Since $\rho_1 \equiv 0$ on $[0,\ell]$, under assumptions~\eqref{as:p}, we have that $p$ is a monotonically decreasing function of $\rho_1$ in $[0,\ell]$. Hence, the results~\eqref{e.Pon0rFB22} and \eqref{e:PeqFB} allow us to conclude that $\rho_2$ is decreasing in $(0,\ell)$. Moreover, for $N_2>0$ given, since the value of $\rho_2(z)$ is uniquely determined for all $z \in [0,\ell]$, the value of $\ell$ is uniquely fixed by the mass conservation condition~\eqref{eq:mascon}. 
\\\\
\noindent {\it Step 2.} Since $\rho_2 \equiv 0$ on $(-\infty,0)$ and, therefore, $\rho_1 \equiv \rho$ on $(-\infty,0)$, multiplying both sides of the differential equation~\eqref{eq:TW2}$_1$ by $\displaystyle{\frac{\rmd p}{\rmd \rho}}$ we obtain the following boundary-value problem for $p$ 
\beq
\label{e.P-inf0FB12pop}
- p' \, \left(c + \mu_1 \, p' \right) - \mu_1 \, p'' \, \rho \, \frac{\rmd p}{\rmd \rho} = G\left(p\right) \rho \frac{\rmd p}{\rmd \rho},
\eeq
\beq
\label{e.P-inf0BCFB12pop}
p(z) \xrightarrow[ z \rightarrow - \infty]{} P \quad \mbox{and} \quad p(0) = \frac{c \, \ell}{\mu_2}.
\eeq
Hence, using a method similar to that used in {\it Step 1} of the proof of Theorem~\ref{th:TW1} one can prove that $\rho_1$ is decreasing in $(-\infty,0)$. 
\\\\
\noindent {\it Step 3.} The results proved in {\it Step 1} and {\it Step 2} ensure that the total cell density $\rho$ is nonincreasing and continuous in $(-\infty,0)$ and in $(0,\ell)$. We now prove that $\rho$ is continuous in $z=0$. Adding together the differential equations~\eqref{eq:TW2}$_1$~and~\eqref{eq:TW2}$_2$ gives
\beq
\label{eq:sumrhostw2}
-c \left(\rho_1 + \rho_2 \right)' - \big(\left(\mu_1 \rho_1 + \mu_2 \rho_2\right) \, p' \big)' = G(p) \rho_1.
\eeq
Multiplying both sides of the above differential equation by $p$ and using the fact that
$$
p \, \big(\left(\mu_1 \rho_1 + \mu_2 \rho_2\right) \, p' \big)' = \big(p \, \left(\mu_1 \rho_1 + \mu_2 \rho_2\right) \, p' \big)' - \left(\mu_1 \rho_1 + \mu_2 \rho_2\right) \, (p')^2
$$
we find that
$$
\left(\mu_1 \rho_1 + \mu_2 \rho_2\right) \, (p')^2 = G(p) \rho_1 + c \, p \left(\rho_1 + \rho_2 \right)' + \big(p \, \left(\mu_1 \rho_1 + \mu_2 \rho_2\right) \, p' \big)'. 
$$
Integrating both sides of the latter differential equation between a generic point $z^*<0$ and $\ell$, and estimating the right-hand side from above by using the fact that $-\infty<\left(\rho_1 + \rho_2 \right)'(z) \leq 0$ for all $z \in [z^*,\ell)$, $p(\ell)=0$ and $\rho_2(z^*)=0$, yields
$$
\int_{z^*}^{\ell}  \left(\mu_1 \rho_1 + \mu_2 \rho_2\right) \, (p')^2 \, {\rmd}z \leq \int_{z^*}^{\ell} G(p) \rho_1 \, {\rmd}z - \mu_1 \, \rho_1(z^*) \, p(z^*) \, p'(z^*) < \infty.
$$
The above integral inequality ensures that $p' \in L^2_{loc}(\mathbb{R})$. This result along with the fact that $p \in L^{\infty}(\mathbb{R})$ allow us to conclude that $p$ is continuous in $z=0$. Since $p \equiv \Pi(\rho)$ and $\Pi$ is a smooth monotonically increasing function of $\rho$, we have that the total cell density $\rho$ is continuous in $z=0$ as well.
\\\\
\noindent {\it Step 4.} Integrating the differential equation \eqref{eq:sumrhostw2} between a generic point $z < \ell$ and $\ell$ and using the fact that $\rho_1(\ell)=\rho_2(\ell)=0$ yields
$$
c \left(\rho_1(z) + \rho_2(z) \right) + \left(\mu_1 \rho_1(z) + \mu_2 \rho_2(z)\right) \, p'(z) = \int_{z}^{\ell} G(p) \rho_1 \, {\rmd}z'.
$$
Letting $z \to 0^-$ and using the fact that $\rho_1(0^-) = \rho(0^-)$ and $\rho_1 \equiv 0$ on $[0,\ell]$ we find that
\beq
\label{neqrev1}
c \, \rho(0^-) + \mu_1 \, \rho(0^-) \, p'(0^-) = 0.
\eeq
Similarly, letting $z \to 0^+$ and using the fact that $\rho_2(0^+) = \rho(0^+)$ and $\rho_1 \equiv 0$ on $[0,\ell]$ gives
\beq
\label{neqrev2}
c \, \rho(0^+) + \mu_2 \, \rho(0^+) \, p'(0^+) = 0.
\eeq
Since $\rho(z)$ is continuous in $z=0$, combining equations~\eqref{neqrev1} and~\eqref{neqrev2} we obtain
$$
\mu_1 \, p'(0^-) = \mu_2 \, p'(0^+) \quad \Longrightarrow \quad p'(0^-) = \frac{\mu_2}{\mu_1} \, p'(0^+).
$$
This result along with the expression~\eqref{e.Pon0rFB22} for $p'(0^+)$ gives
$$
p'(0^-) = -\frac{c}{\mu_1} \quad \text{and} \quad p'(0^+) - p'(0^-) = \frac{c}{\mu_1 \, \mu_2} \left(\mu_2 - \mu_1\right).
$$
From the latter equation we deduce the jump condition~\eqref{eq:ppjump}. Moreover, substituting the expression of $p'(0^-)$ into the differential equation~\eqref{e.P-inf0FB12pop} gives $p''(0^-)<0$.
\\\\
\noindent {\it Step 5.}  We prove that $p$ is a monotonically decreasing function of the parameter $c$ on $(-\infty,0)$. This ensures that given the cell density $\rho_1$ the wave speed $c$ can be uniquely identified through a monotonicity argument. Recalling that $\rho_2 \equiv 0$ on $(-\infty,0)$ and, therefore, $\rho_1 \equiv \rho$ on $(-\infty,0)$, differentiating equation~\eqref{eq:TW2}$_1$ with respect to $z$ we find
\beq 
\label{e.mono2FB1}
- c \, \left(\rho'\right)' - \mu_1 \left[\left(p'\right)'' \rho + p'' \, \rho' + (p')' \rho' + p'  (\rho')' \right] = \frac{\rmd G}{\rmd p} \,p' \, \rho + G\left(p\right) \, \rho'
\eeq
with
$$
p'(z) \xrightarrow[ z \rightarrow - \infty]{} 0 \quad \text{and} \quad p'(0) = -\frac{c}{\mu_1}.
$$
On the other hand, differentiating equation~\eqref{eq:TW2}$_1$ with respect to $c$ gives
\begin{equation}  \label{e.mono1FB1}
\begin{aligned} 
- c \, \left(\frac{\partial \rho}{\partial c}\right)' - \mu_1 \left[\left(\frac{\partial p}{\partial c}\right)'' \rho + p'' \frac{\partial \rho}{\partial c} + \left(\frac{\partial p}{\partial c}\right)' \rho' + p'  \left(\frac{\partial \rho}{\partial c}\right)' \right]  \\
 =  \frac{\rmd G}{\rmd p}  \, \frac{\partial p}{\partial c} \, \rho + G\left(p\right) \frac{\partial \rho}{\partial c} + \rho'
\end{aligned} 
\end{equation} 
with
$$
\frac{\partial p}{\partial c}(z) \xrightarrow[ z \rightarrow - \infty]{} 0 \quad \text{and} \quad \left(\frac{\partial p}{\partial c}\right)'(0) = -\frac{1}{\mu_1}.
$$
Using the fact that $\displaystyle{p' =  \frac{\rmd p}{\rmd \rho} \, \rho'}$, we rewrite the differential equations \eqref{e.mono2FB1} and \eqref{e.mono1FB1}, respectively, as
\beq \label{e.mono2bisFB41}
 - c \, \left(\rho'\right)' - \mu \,\left[\left(p'\right)'' \rho + p'' \rho' + (p')' \rho' + \frac{\rmd p}{\rmd \rho}  \, \rho' \, (\rho')' \right] = \frac{\rmd G}{\rmd p}  \,p' \, \rho + G\left(p\right) \, \rho'
\eeq
and
\begin{eqnarray}  \label{e.mono1bisFB511}
\begin{aligned} 
- c \, \left(\frac{\partial \rho}{\partial c}\right)' - \mu \,\left[\left(\frac{\partial p}{\partial c}\right)'' \rho + p'' \frac{\partial \rho}{\partial c} + \left(\frac{\partial p}{\partial c}\right)' \rho' + \frac{\partial p}{\partial \rho} \, \rho' \, \left(\frac{\partial \rho}{\partial c}\right)' \right]  \\
=\frac{\rmd G}{\rmd p}  \, \frac{\partial p}{\partial c} \, \rho + G\left(p\right) \frac{\partial \rho}{\partial c} + \rho'.
\end{aligned} 
\end{eqnarray} 
Since $\displaystyle{\frac{\rmd p}{\rmd \rho}  > 0}$, we have 
$$
\rho' =  p' \, \left(\frac{\rmd p}{\rmd \rho} \right)^{-1} \; \mbox{and } \; \frac{\partial \rho}{\partial c} =  \frac{\partial p}{\partial c} \, \left(\frac{\rmd p}{\rmd \rho} \right)^{-1}.
$$
Hence, introducing the notation $f = p'$, $\displaystyle{g = \frac{\partial p}{\partial c}}$ and
$$
\begin{aligned} 
& k_0 = \left(\frac{\rmd p}{\rmd \rho} \right)^{-1}, \quad && k_1 = \rho,  \qquad k_2 = p'' \, \left(\frac{\rmd p}{\rmd \rho} \right)^{-1}, \quad && k_3 = \rho' , \\
& k_4 =\rho' \, \frac{\rmd p}{\rmd \rho},  \quad && k_5 = \frac{\rmd G}{\rmd p}  \, \rho + G\left(p\right) \, \left(\frac{\rmd p}{\rmd \rho} \right)^{-1},
\end{aligned} 
$$
we rewrite the differential equations \eqref{e.mono2bisFB41} and \eqref{e.mono1bisFB511}, respectively, as
\beq \label{e.mono2bisFBb61}
\displaystyle{- c \, \left( k_0 \, f \right)' - \mu \, \left[k_1 \, f'' + k_2 \, f + k_3 \, f' + k_4 \, \left(k_0 \, f\right)' \right] = k_5 \,f}
\eeq
with
\beq \label{e.BCnew1}
f(z) \xrightarrow[ z \rightarrow - \infty]{} 0, \quad f(0) = -\frac{c}{\mu_1}
\eeq
and 
\beq \label{e.mono1bisFBb71}
\displaystyle{- c \, \left( k_0 \, g \right)' - \mu \, \left[k_1 \, g'' + k_2 \, g + k_3 \, g' + k_4 \, \left(k_0 \, g\right)' \right] = k_5 \, g + \rho'}
\eeq
with
\beq \label{e.BCnew1b}
g(z) \xrightarrow[ z \rightarrow - \infty]{} 0, \quad g'(0) = -\frac{1}{\mu_1}.
\eeq
Since $f(z)<0$ for all $z \in (-\infty,0)$ and $f'(0)<0$, noting that both $f(z) \to 0$ and $g(z) \to 0$ as $z \to -\infty$ and the right-hand side of the differential equation~\eqref{e.mono1bisFBb71} contains the additional term $\rho'<0$ compared to the right-hand side of the differential equation~\eqref{e.mono2bisFBb61}, we deduce that $\displaystyle{g = \frac{\partial p}{\partial c} < 0}$, which concludes the proof of Theorem~\ref{th:TW2}.
\qed
\end{proof}

\begin{remark}
\label{reminsta}
Based on the jump condition~\eqref{eq:ppjump}, we expect the travelling-wave solution of Theorem~\ref{th:TW2} to be unstable if $\mu_1 >\mu_2$. In fact, a small perturbation of $\rho_1(z)$ that is greater than zero on $[0,\ell)$ will propagate with approximate speed $-\mu_1 \, p'(0^+)$. Noting that when $\mu_1 >\mu_2$ the jump condition~\eqref{eq:ppjump} gives $-\mu_1 \, p'(0^+) > -\mu_1 \, p'(0^-)$, we deduce that such a perturbation will separate from the rest of the travelling wave $\rho_1(z)$.
\end{remark}

\subsection{Quantitative comparison between individual-based and continuum models}
\label{sec:numres}

\subsubsection{One cell population}
We begin by comparing the computational simulation results of our individual-based model in the case of one cell population with the numerical solutions of equation~\eqref{eq:pde1}. For consistency with equation \eqref{eq:pde1}, we set $M=1$ and we drop the index $h=1$ both from the functions and from the parameters of the individual-based model. A complete description of the setup of numerical simulations is given in Appendix~\ref{appendix11} and Appendix~\ref{appendix21}. In particular, we use the following definition of the net growth rate $G$
\beq
\label{Gnum}
G(p) = \frac{1}{2 \, \pi} \, \arctan(\beta \, (P-p)) \quad \text{with} \quad \beta>0,
\eeq
so that assumptions~\eqref{as:G} are satisfied.

\paragraph{Travelling fronts.} Figure~\ref{Fig2}, along with the video accompanying it (\emph{vid.} Online~Resource~1), demonstrates that there is an excellent quantitative match between the numerical solutions of equation~\eqref{eq:pde1} and the computational simulation results of our individual-based model.
In agreement with the results established by Theorem~\ref{th:TW1}, the cell density is nonincreasing and connects the homogeneous steady state $\rho \equiv \Pi^{-1}(P)$ to the homogeneous steady state $\rho \equiv 0$. Accordingly, the cell pressure is nonincreasing and connects the homogeneous steady state $p \equiv P$ to the homogeneous steady state $p \equiv 0$. 
\begin{figure}
\centering
\includegraphics[width=\textwidth]{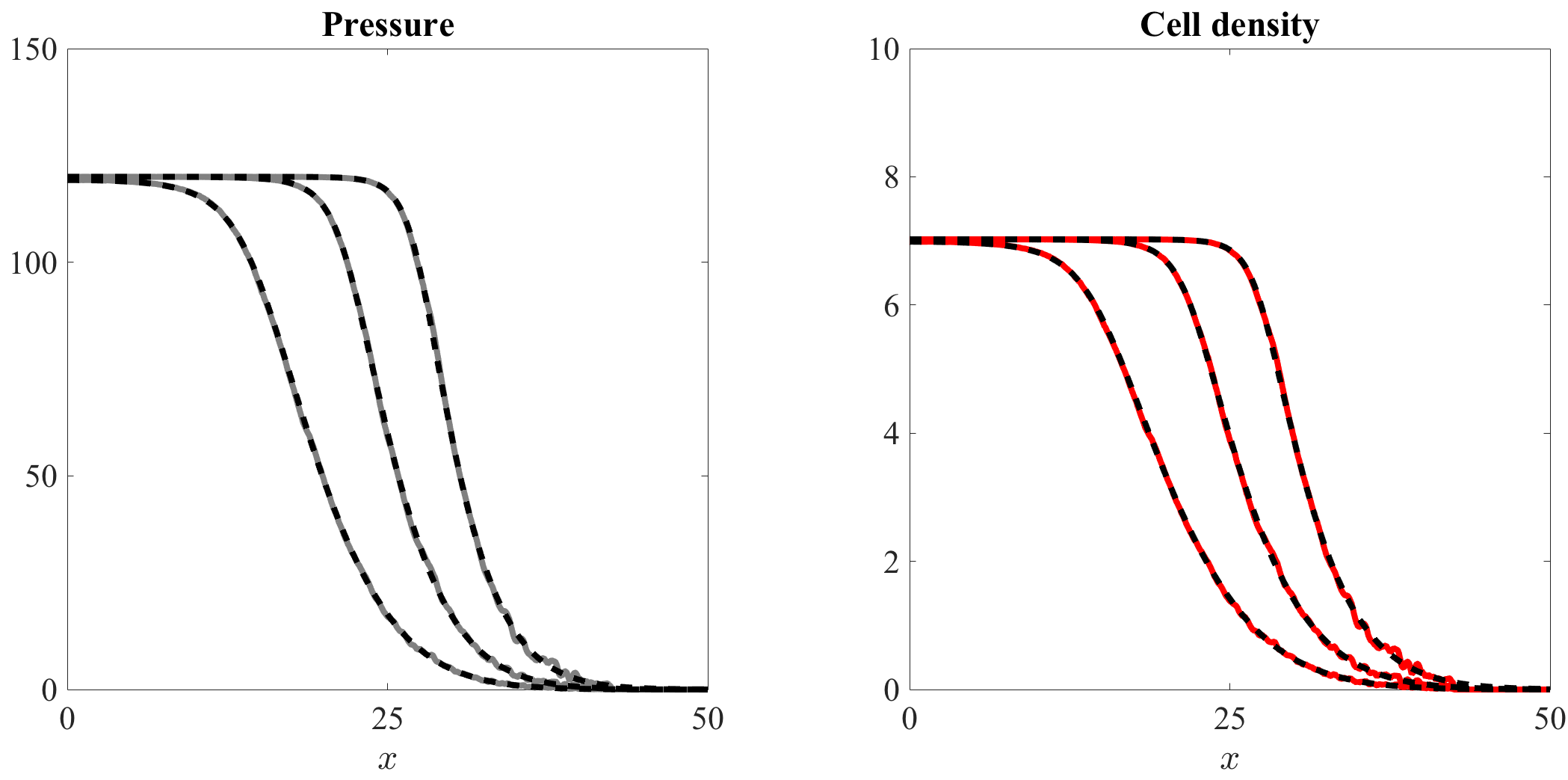}
\caption{{\bf Travelling fronts.} Comparison between the computational simulation results of our individual-based model in the case of one cell population (solid lines) and the numerical solutions of the continuum model~\eqref{eq:pde1} (dashed lines). The left and right panel display, respectively, the pressure and the cell density at three successive time instants, \emph{i.e.} $t=10$ (left curves), $t=15$ (middle curves) and $t=20$ (right curves). Values of the pressure and the cell density are in units of $10^4$. Simulations were carried out using a barotropic relation that satisfies assumptions~\eqref{as:p} and the definition~\eqref{Gnum} of $G(p)$, with the homeostatic pressure $P=120 \times 10^4$ and the coefficient $\beta=4 \times 10^{-6}$. A complete description of the numerical simulation setup is given in Appendix~\ref{appendix11} and Appendix~\ref{appendix21}. Sample dynamics of the pressure and the cell density obtained from the individual-based model and the continuum model are shown in the video accompanying this figure (\emph{vid.} Online~Resource~1)}
\label{Fig2}
\end{figure}

\paragraph{Higher values of $\beta$ lead to higher speed of invasion.} 
Figure~\ref{Fig3}, along with the video accompanying it (\emph{vid.} Online~Resource~2), indicates that, as one would expect, increasing the value of the parameter $\beta$ in the definition~\eqref{Gnum} of $G(p)$ accelerates the growth of the cell population, thus leading to a higher speed of invasion.
\begin{figure}
\centering
\includegraphics[width=\textwidth]{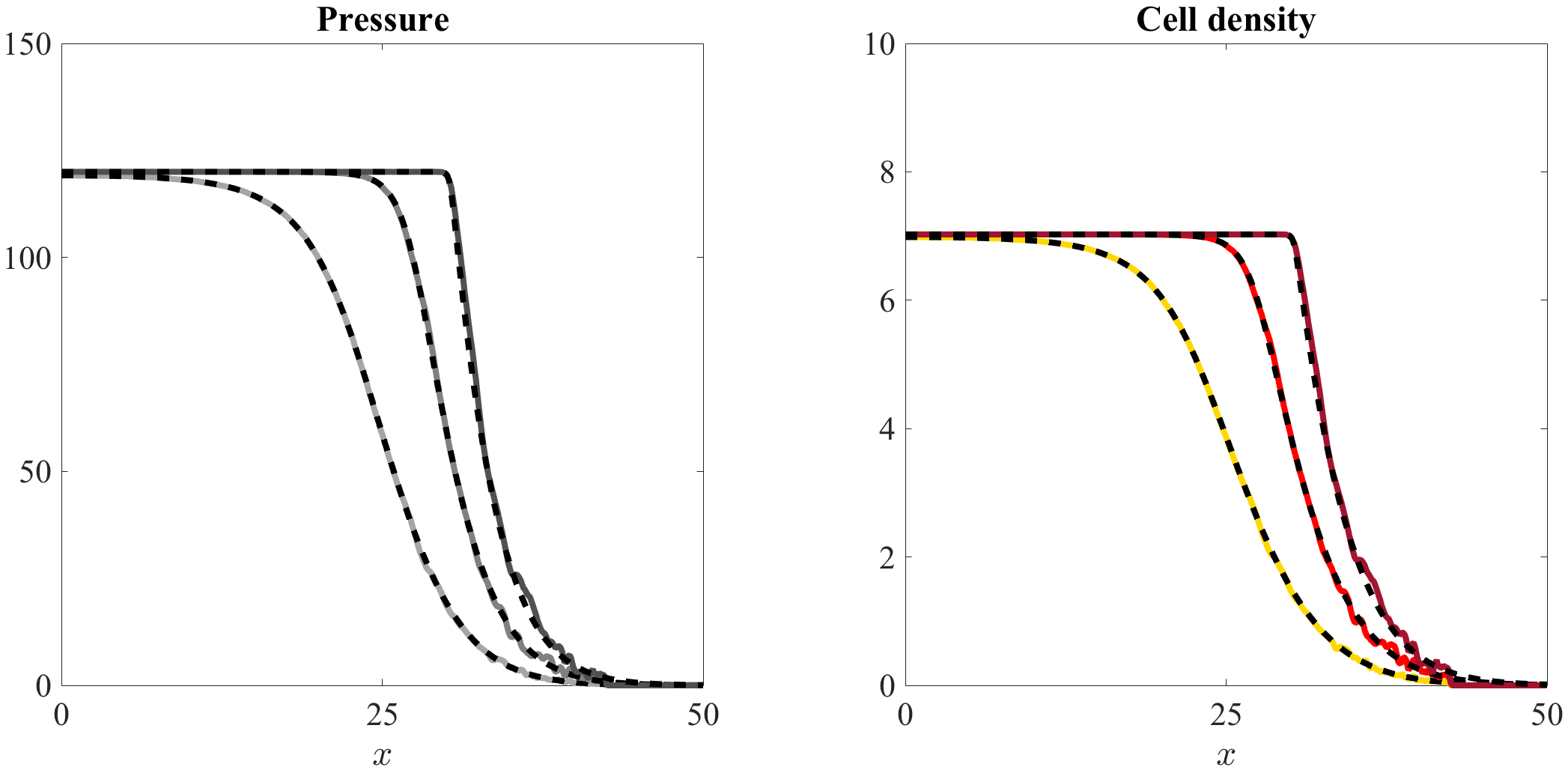}
\caption{{\bf Higher values of $\beta$ lead to higher speed of invasion.} Comparison between the computational simulation results of our individual-based model in the case of one cell population (solid lines) and the numerical solutions of the continuum model~\eqref{eq:pde1} (dashed lines). The left and right panel display, respectively, the pressure and the cell density at the time instant $t=15$ for increasing values of the parameter $\beta$ in the definition~\eqref{Gnum} of $G(p)$, \emph{i.e.} $\beta = 1.5 \times 10^{-6}$ (light grey and yellow lines), $\beta = 4 \times 10^{-6}$ (middle grey and red lines) and $\beta = 4 \times 10^{-5}$ (dark grey and brown lines). Values of the pressure and the cell density are in units of $10^4$. Simulations were carried out using a barotropic relation that satisfies assumptions~\eqref{as:p} with the homeostatic pressure $P=120 \times 10^4$. A complete description of the numerical simulation setup is given in Appendix~\ref{appendix11} and Appendix~\ref{appendix21}. Sample dynamics of the pressure and the cell density obtained from the individual-based model and the continuum model are shown in the video accompanying this figure (\emph{vid.} Online~Resource~2)}
\label{Fig3}
\end{figure}

\paragraph{Differences between the outcomes of individual-based and continuum models in the presence of sharp transitions from high to low cell densities.} The results presented so far indicate that there is an excellent agreement between the computational simulation results of our individual-based model and the solutions of the corresponding continuum models. However, due to extinction phenomena related to stochasticity effect that occur in the individual-based model for low cell densities, we expect differences between the outcomes of the two modelling approaches to emerge in the presence of sharp transitions from high to low total cell densities. In order to verify this hypothesis, exploiting the asymptotic results of Perthame {\it et al.} ~\cite{perthame2014hele} -- who have shown that under the barotropic relation~\eqref{def:p} sufficiently high values of the parameter $\gamma$ lead equation~\eqref{eq:pde1} to develop sharper invasion fronts -- we compare the computational simulation results of our individual-based model in the case of one cell population with the numerical solutions of equation~\eqref{eq:pde1} under the barotropic relation~\eqref{def:p} for increasing values of $\gamma$. A complete description of the setup of numerical simulations is given in Appendix~\ref{appendix11} and Appendix~\ref{appendix21}. The results obtained are summarised by Figure~\ref{Fig4} which shows that larger values of the parameter $\gamma$ can bring about sharper invasion fronts, which come along with more abrupt variations in the cell density, thus leading to more evident differences between the computational simulation results of the stochastic individual-based model and the numerical solutions of equation~\eqref{eq:pde1} at the front of invasion. Ultimately, this causes the invasion front of the individual-based model to travel at the same speed but behind the front of the corresponding continuum model.
\begin{figure}
\centering
\includegraphics[width=\textwidth]{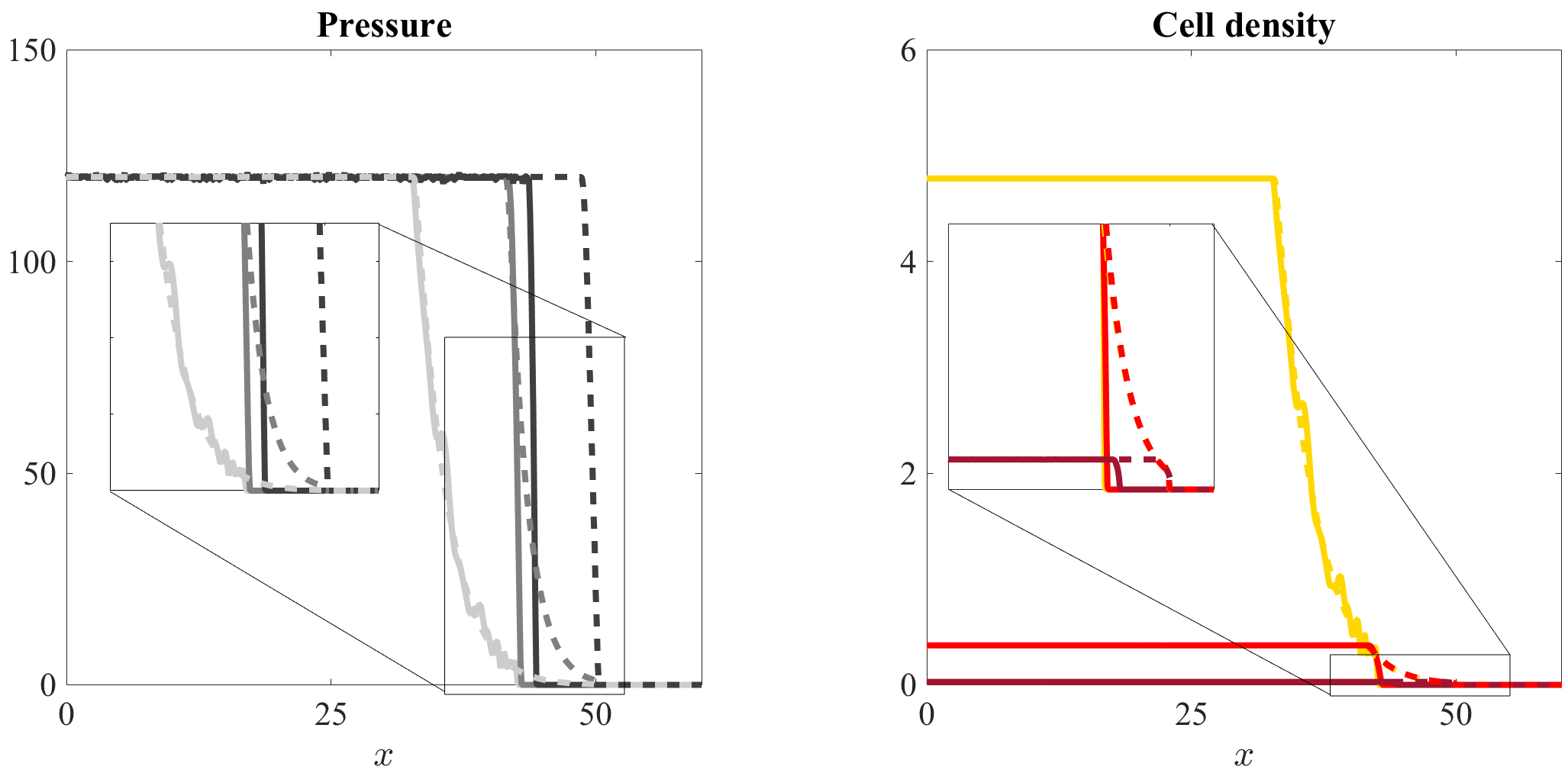}
\caption{{\bf Differences in the outcomes of individual-based and continuum models in the presence of sharp transitions from high to low cell densities.} Comparison between the computational simulation results of our individual-based model in the case of one cell population (solid lines) and the numerical solutions of the continuum model~\eqref{eq:pde1} (dashed lines). The left and right panel display, respectively, the pressure and the cell density at the time instant $t=15$ for increasing values of the parameter $\gamma$ in the barotropic relation~\eqref{def:p}, \emph{i.e.} $\gamma = 1.2$ (light grey and yellow lines), $\gamma = 1.5$ (middle grey and red lines) and $\gamma=2$ (dark grey and brown lines). Values of the pressure and the cell density are in units of $10^5$. Magnifications of the curves near the invasion fronts are shown in the insets. Simulations were carried out using the definition ~\eqref{Gnum} of $G(p)$, with the homeostatic pressure $P=120 \times 10^5$ and the coefficient $\beta=4 \times 10^{-5}$. A complete description of the numerical simulation setup is given in Appendix~\ref{appendix11} and Appendix~\ref{appendix21}}
\label{Fig4}
\end{figure}

\subsubsection{Two cell populations}
We now turn to the case of two cell populations and we compare computational simulation results of our individual-based model with numerical solutions of the system of equations~\eqref{eq:pde3}. For consistency with the system of equations~\eqref{eq:pde3}, we choose $M=2$, and we set $G_1 \equiv G$ and $G_2 \equiv 0$ in the individual-based model. Full details of the setup of numerical simulations can be found in Appendix~\ref{appendix12} and Appendix~\ref{appendix22} for the results reported in Figures \ref{Fig5} and \ref{Fig6}, and Appendix~\ref{appendix13} and Appendix~\ref{appendix23} for the results reported in Figures \ref{Fig7} and \ref{Fig8}. In particular, we use the definition~\eqref{Gnum} of the net growth rate $G(p)$ with the homeostatic pressure $P=10 \times 10^4$ and we let the population of nonproliferating cells (\emph{i.e.} population $2$) be ahead of the population of proliferating cells (\emph{i.e.} population $1$) at the initial time $t=0$.

Figures \ref{Fig5} and \ref{Fig6}, along with the videos accompanying them (\emph{vid.} Online~Resource~3 and Online~Resource~4), demonstrate that there is an excellent quantitative match between the numerical solutions of the system of equations~\eqref{eq:pde3} and the computational simulation results of our individual-based model, both in the case where $\mu_1 \leq \mu_2$ and when $\mu_1 > \mu_2$. Over time, the pressure $p$ converges to the homeostatic pressure $P$ while the cell density $\rho_1$ converges to the corresponding value $\Pi^{-1}(P)$.

\paragraph{Travelling fronts and spatial segregation between the two cell populations.} If $\mu_1 \leq \mu_2$ (\emph{i.e.} if $\nu_1 \leq \nu_2$), in agreement with the results established by Theorem~\ref{th:TW2}, spatial segregation occurs and the two cell populations remain separated by a sharp interface (\emph{vid.} Figure \ref{Fig5} and Online~Resource~3). The population of nonproliferating cells stays ahead of the population of  proliferating cells and, over the regions where they are greater than zero, the cell densities are nonincreasing. The pressure itself is continuous across the interface between the two cell populations, whereas its first derivative jumps from a lower negative value to a larger negative value, \emph{i.e.} the sign of the jump coincides with $\sgn(\mu_2 - \mu_1)$ (\emph{cf.} the jump condition~\eqref{eq:ppjump}). 
\begin{figure}
\centering
\includegraphics[width=\textwidth]{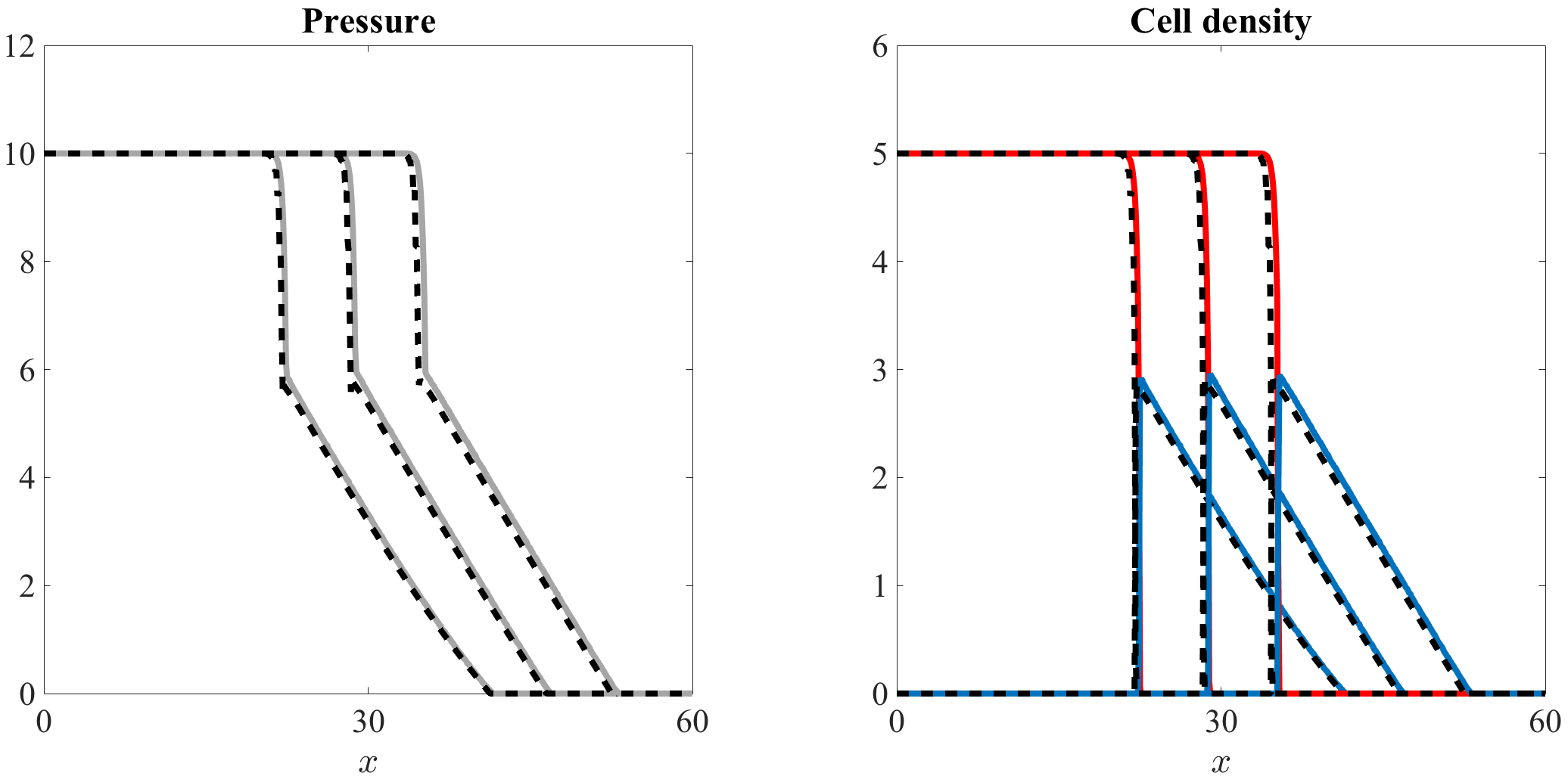}
\caption{{\bf Travelling fronts and spatial segregation between the two cell populations.} Comparison between the computational simulation results of our individual-based model in the case of two cell populations (solid lines) and the numerical solutions of the continuum model~\eqref{eq:pde3} (dashed lines), for $\mu_1 \leq \mu_2$ (\emph{i.e.} $\nu_1 \leq \nu_2$). The left and right panel display, respectively, the pressure and the cell densities of population 1 (red lines) and population 2 (blue lines) at three successive time instants, \emph{i.e.} $t=300$ (left curves), $t=450$ (middle curves) and $t=600$ (right curves). Values of the pressure and the cell densities are in units of $10^4$. Simulations were carried out using a barotropic relation that satisfies assumptions~\eqref{as:p} and the definition~\eqref{Gnum} of $G(p)$, with the homeostatic pressure $P=10 \times 10^4$ and the coefficient $\beta=4 \times 10^{-5}$. A complete description of the numerical simulation setup is given in Appendix~\ref{appendix12} and Appendix~\ref{appendix22}. Sample dynamics of the pressure and the cell density obtained from the individual-based model and the continuum model are shown in the video accompanying this figure (\emph{vid.} Online~Resource~3)}
\label{Fig5}
\end{figure}

\paragraph{Mixing between the two cell populations.} If $\mu_1 > \mu_2$ (\emph{i.e.} if $\nu_1 > \nu_2$) the cell population 2 is left behind by the cell population 1, which ultimately propagates alone (\emph{vid.} Figure \ref{Fig6} and Online~Resource~4). This is consistent with the heuristic argument provided in Remark~\ref{reminsta}, which suggests that the travelling-wave solutions of Theorem~\ref{th:TW2} are unstable in the case where $\mu_1 > \mu_2$. 
\begin{figure}
\centering
\includegraphics[width=\textwidth]{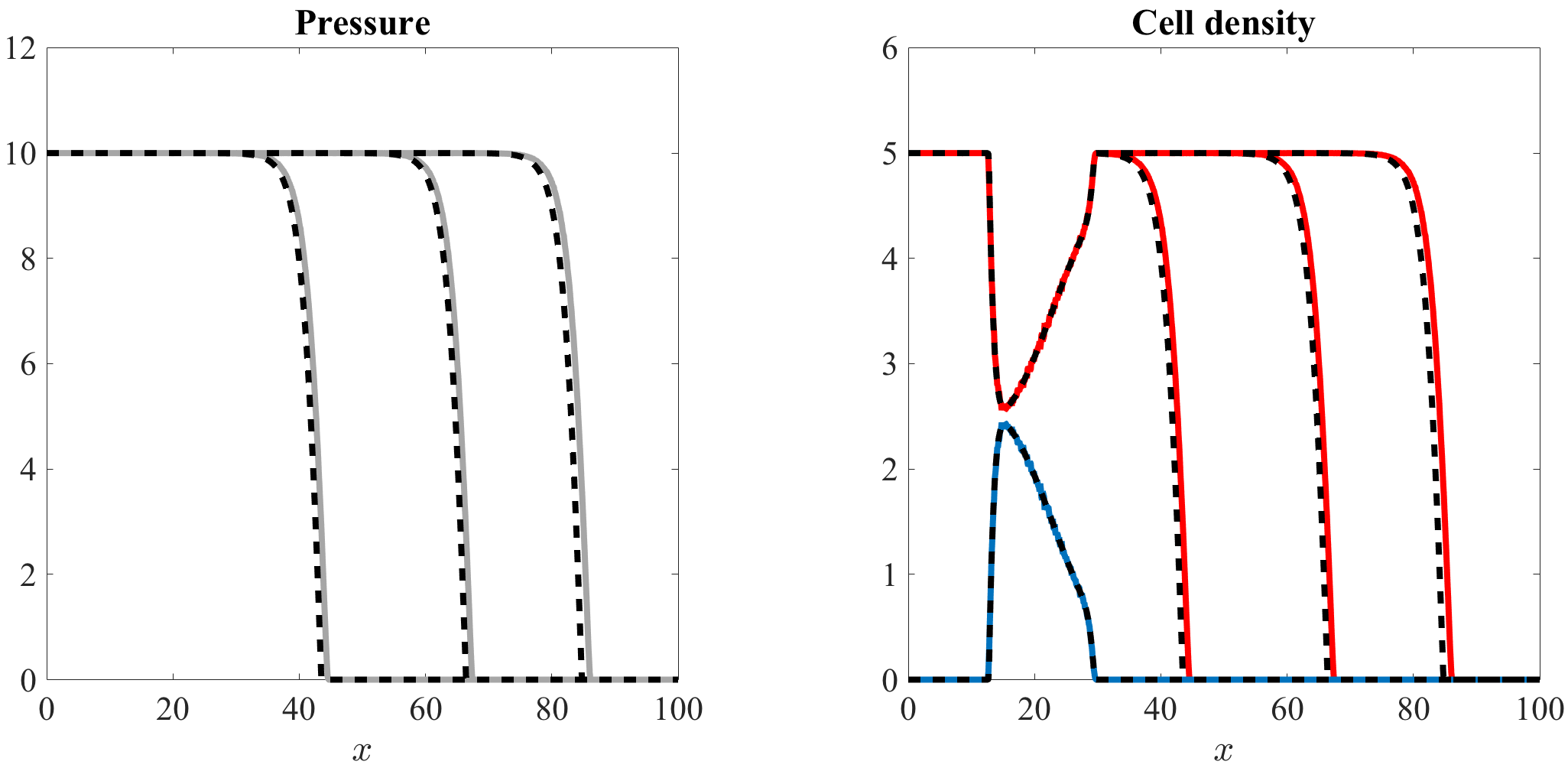}
\caption{{\bf Mixing between the two cell populations.} Comparison between the computational simulation results of our individual-based model in the case of two cell populations (solid lines) and the numerical solutions of the continuum model~\eqref{eq:pde3} (dashed lines), for $\mu_1 > \mu_2$ (\emph{i.e.} $\nu_1 > \nu_2$). The left and right panel display, respectively, the pressure and the cell densities of population 1 (red lines) and population 2 (blue lines) at three successive time instants, \emph{i.e.} $t=100$ (left curves), $t=150$ (middle curves) and $t=190$ (right curves). Values of the pressure and the cell densities are in units of $10^4$. Simulations were carried out using a barotropic relation that satisfies assumptions~\eqref{as:p} and the definition~\eqref{Gnum} of $G(p)$, with the homeostatic pressure $P=10 \times 10^4$ and the coefficient $\beta=4 \times 10^{-5}$. A complete description of the numerical simulation setup is given in Appendix~\ref{appendix12} and Appendix~\ref{appendix22}. Sample dynamics of the pressure and the cell density obtained from the individual-based model and the continuum model are shown in the video accompanying this figure (\emph{vid.} Online~Resource~4)}
\label{Fig6}
\end{figure}

\paragraph{Numerical simulations for more realistic barotropic relations.} In this paper, we have focused on barotropic relations that satisfy assumptions~\eqref{as:p}. However, as mentioned earlier in Section~\ref{intro}, more realistic barotropic relations satisfy the following conditions
\beq
\Pi(\rho) = 0 \; \text{ for } \; \rho \in [0,\rho^*) \quad \text{and} \quad \frac{\rmd \Pi}{\rmd \rho}>0 \; \text{ for } \; \rho>\rho^*,
\label{as:pmr}
\eeq
with $0<\rho^*<\Pi^{-1}(P)$. Figures~\ref{Fig7}~and~\ref{Fig8}, along with the videos accompanying them (\emph{vid.} Online~Resource~5 and Online~Resource~6), display the computational simulation results of our stochastic individual-based model and the numerical solutions of the system of equations~\eqref{eq:pde3} obtained under a barotropic relation that satisfies the more general assumptions~\eqref{as:pmr} (\emph{vid.} Appendix~\ref{appendix13} and Appendix~\ref{appendix23} for a complete description of the numerical simulation setup). These computational results and numerical solutions clearly share the same properties as those of Figures~\ref{Fig5}~and~\ref{Fig6}. This supports the conclusion that the essentials of the results obtained using barotropic relations of the form~\eqref{as:p} remain intact even under the more realistic assumptions~\eqref{as:pmr}.
\begin{figure}
\centering
\includegraphics[width=\textwidth]{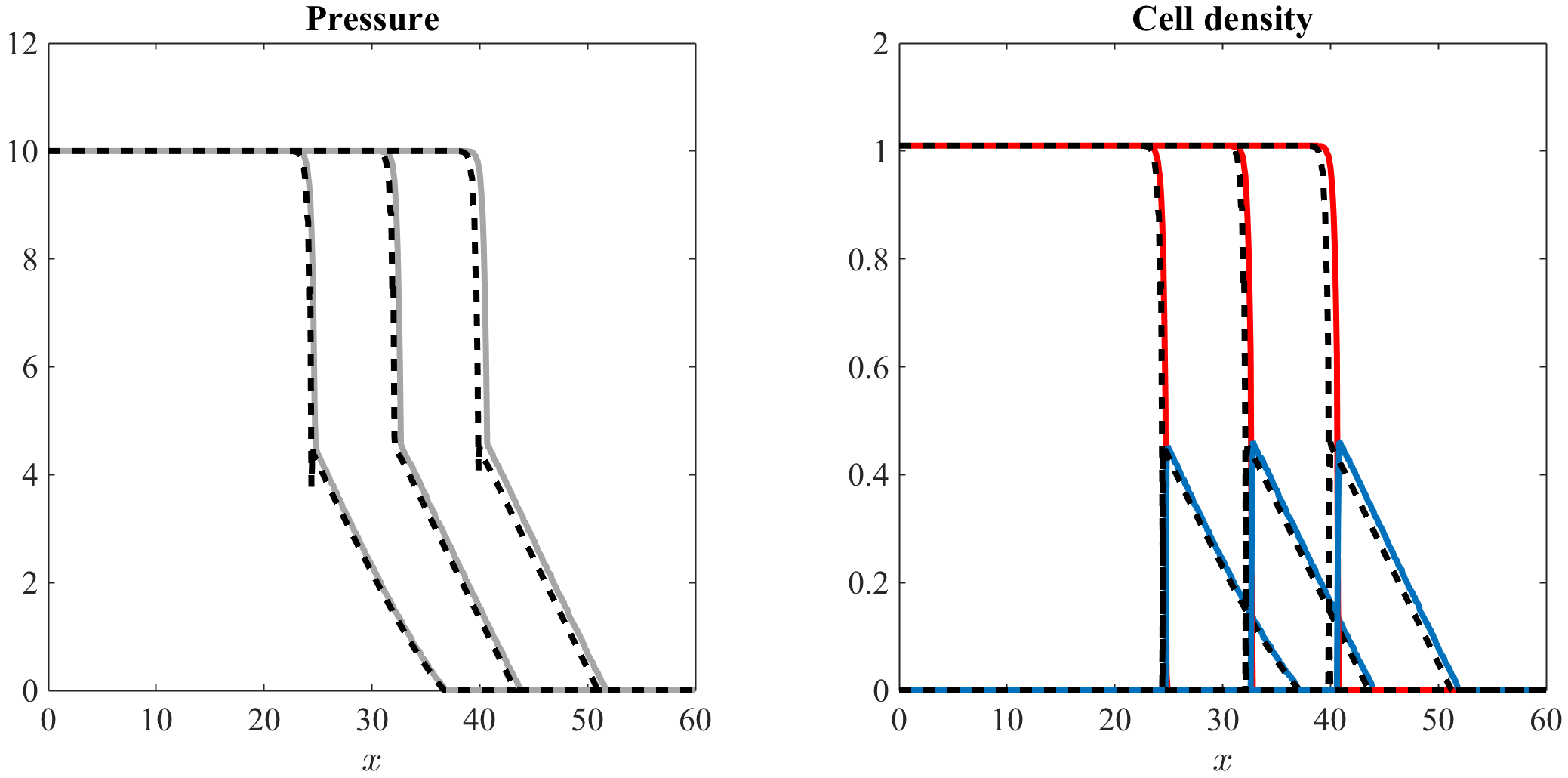}
\caption{{\bf Numerical simulations for more realistic barotropic relations.} Comparison between the computational simulation results of our individual-based model in the case of two cell populations (solid lines) and the numerical solutions of the continuum model~\eqref{eq:pde3} (dashed lines), for $\mu_1 \leq \mu_2$ (\emph{i.e.} $\nu_1 \leq \nu_2$). The left and right panel display, respectively, the pressure and the cell densities of population 1 (red lines) and population 2 (blue lines) at three successive time instants, \emph{i.e.} $t=300$ (left curves), $t=450$ (middle curves) and $t=600$ (right curves). Values of the pressure and the cell densities are in units of $10^4$. Simulations were carried out using a barotropic relation that satisfies the assumptions~\eqref{as:pmr} and the definition~\eqref{Gnum} of $G(p)$ with the homeostatic pressure $P=10 \times 10^4$ and the coefficient $\beta=4 \times 10^{-5}$. A complete description of the numerical simulation setup is given in Appendix~\ref{appendix13} and Appendix~\ref{appendix23}. Sample dynamics of the pressure and the cell density obtained from the individual-based model and the continuum model are shown in the video accompanying this figure (\emph{vid.} Online~Resource~5)}
\label{Fig7}
\end{figure}
\begin{figure}
\centering
\includegraphics[width=\textwidth]{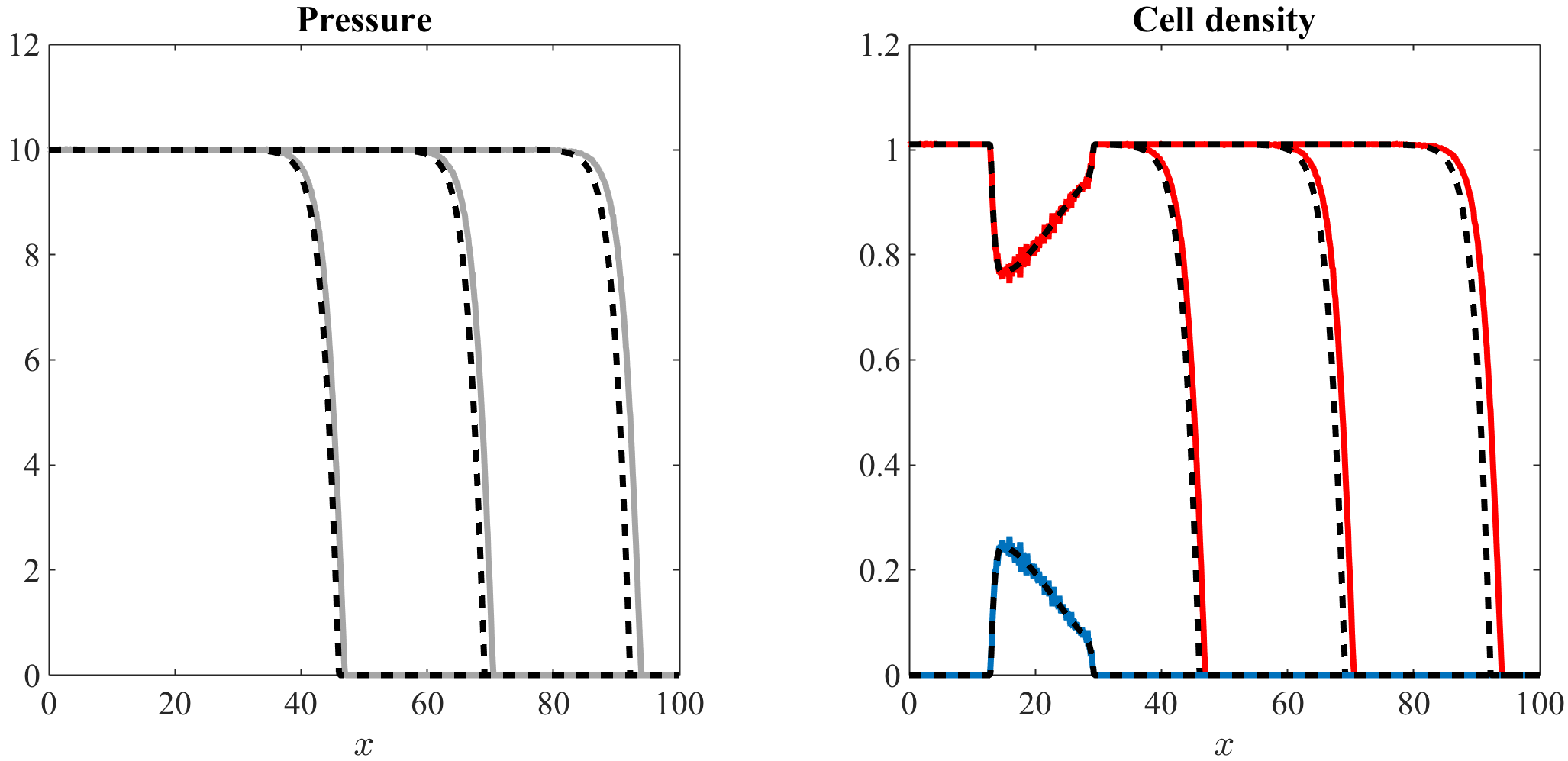}
\caption{{\bf Numerical simulations for more realistic barotropic relations.} Comparison between the computational simulation results of our individual-based model in the case of two cell populations (solid lines) and the numerical solutions of the continuum model~\eqref{eq:pde3} (dashed lines), for $\mu_1 > \mu_2$ (\emph{i.e.} $\nu_1 > \nu_2$). The left and right panel display, respectively, the pressure and the cell densities of population 1 (red lines) and population 2 (blue lines) at three successive time instants, \emph{i.e.} $t=100$ (left curves), $t=150$ (middle curves) and $t=200$ (right curves). Values of the pressure and the cell densities are in units of $10^4$. Simulations were carried out using a barotropic relation that satisfies assumptions~\eqref{as:pmr} and the definition~\eqref{Gnum} of $G(p)$ with the homeostatic pressure $P=10 \times 10^4$ and the coefficient $\beta=4 \times 10^{-5}$. A complete description of the numerical simulation setup is given in Appendix~\ref{appendix13} and Appendix~\ref{appendix23}. Sample dynamics of the pressure and the cell density obtained from the individual-based model and the continuum model are shown in the video accompanying this figure (\emph{vid.} Online~Resource~6)}
\label{Fig8}
\end{figure}

\section{Conclusions and research perspectives}
\label{sec:fin}
In summary, we have developed a simple, yet effective, stochastic individual-based model for the spatial dynamics of multicellular systems whereby cells undergo pressure-driven movement and pressure-dependent proliferation. We have shown that nonlinear partial differential equations commonly used to model the spatial dynamics of growing cell populations can be formally derived from the branching random walk that underlies our discrete model. Moreover, we have carried out a systematic comparison between the individual-based model and its continuum counterparts, both in the case of one single cell population and in the case of multiple cell populations with different biophysical properties. The outcomes of our comparative study demonstrate that the results of computational simulations of the individual-based model faithfully mirror the qualitative and quantitative properties of the solutions to the corresponding nonlinear partial differential equations. Ultimately, these results illustrate how the simple rules governing the dynamics of single cells in our individual-based model can lead to the emergence of complex spatial patterns of population growth observed in continuum models -- \emph{e.g.} travelling waves with composite shapes and sharp interfaces corresponding to spatial segregation between cell populations with different biophysical properties.

We have focussed on the case of one spatial dimension and considered barotropic relations that satisfy assumptions~\eqref{as:p}. However, the model presented here, as well as the related formal method to derive corresponding continuum models, can be adapted to higher spatial dimensions and more realistic barotropic relations of the form~\eqref{as:pmr}. In this regard, it would be interesting to use our stochastic individual-based model to further investigate the formation of finger-like patterns of invasion observed for the system of equations~\eqref{eq:pde3} posed on a two dimensional spatial domain~\cite{lorenzi2017interfaces}. Such spatial patterns resemble infiltrating patterns of cancer-cell invasion commonly observed in breast tumours~\cite{wang2012adipose}. An additional development of our study would be to compare the results presented here with those obtained from equivalent models defined on irregular lattices, as well as to investigate how our modelling approach could be related to off-lattice models of growing cell populations~\cite{drasdo2005coarse,motsch2018short,van2015simulating}. 

Our modelling framework can be easily extended to incorporate additional layers of biological complexity, such as nutrient-limited proliferation, undirected random cell movement, chemotaxis and haptotaxis. These are all lines of research that we will be pursuing in the near future.  


\appendix
\section{Details of numerical simulations of the individual-based model}
\label{appendix1}
We use a uniform discretisation of the interval $[0,100]$ that consists of 1001 points as the spatial domain (\emph{i.e.} the grid-step is $\chi=0.1$) and we choose the time-step $\tau=2\times10^{-3}$. We implement zero-flux boundary conditions by letting the attempted move of a cell be aborted if it requires moving out of the spatial domain. For all simulations, we use the definition~\eqref{Gnum} of $G(p)$ and we perform numerical computations in {\sc Matlab}. Further details specific either to the case of one single cell population or to the case of two cell populations are provided in the next subsections.

\subsection{Setup of numerical simulations for the case of one single cell population}
\label{appendix11}
For consistency with equation \eqref{eq:pde1}, we set $M=1$ and we drop the index $h=1$ both from the functions and from the parameters of the individual-based model. We define the rate $G$ in equations \eqref{e:div}-\eqref{e:qui} according to equation~\eqref{Gnum}. We set the homeostatic pressure $P=120 \times 10^4$ for the simulation results reported in Figure~\ref{Fig2} and Figure~\ref{Fig3}, while we choose $P=120 \times 10^5$ for the simulation results of Figure~\ref{Fig4}. Moreover, we choose $\beta = 4 \times 10^{-6}$ for the simulation results reported in Figure~\ref{Fig2}, $\beta = 4 \times 10^{-5}$ for the simulation results of Figure~\ref{Fig4}, and $\beta \in \left\{ 1.5 \times 10^{-6},  4 \times 10^{-6},  4 \times 10^{-5} \right\}$ for the simulation results of Figure~\ref{Fig3}. We set $\nu=0.02$ in equations~\eqref{e:left}-\eqref{e:stay} and we define the pressure $p^k_i$ according to the following barotropic relation
$$
\Pi(\rho^k_i) = K_{\gamma} \, (\rho^k_i)^{\gamma} \quad \text{with}  \quad K_{\gamma} = \frac{\gamma+1}{\gamma} \quad \text{and} \quad \gamma>1,
$$
which satisfies conditions~\eqref{as:p}. We let $\gamma \in \left\{1.2, 1.5, 2 \right\}$ for the simulation results of Figure~\ref{Fig4}, while we choose $\gamma = 1.2$ for the simulation results reported in Figure~\ref{Fig2} and Figure~\ref{Fig3}. We use the initial cell density
$$
\rho^0_{i} = A \, \exp{\left(- b \, x_i^{2}\right)} \quad \text{with} \quad A = 2\times 10^{4} \quad \mbox{and} \quad b=4\times 10^{-3}.
$$
The results presented in Figure~\ref{Fig2} and Figure~\ref{Fig3} correspond to the average over three simulations of our individual-based model, while the results in Figure~\ref{Fig4} correspond to one single simulation  when $\gamma=1.5$ or $\gamma=2$ and the average over two simulations when $\gamma=1.2$. 

\subsection{Setup of numerical simulations for the case of two cell populations -- Figure~\ref{Fig5} and Figure~\ref{Fig6}}
\label{appendix12}
For consistency with the system of equations \eqref{eq:pde3}, we choose $M=2$, and we set $G_1 \equiv G$ and $G_2 \equiv 0$ in equations \eqref{e:div}-\eqref{e:qui}, where $G$ is defined according to equation~\eqref{Gnum} with the homeostatic pressure $P=10 \times 10^4$ and the factor $\beta = 4 \times 10^{-5}$. We set $\nu_1 = 0.01$ and $\nu_2 = 0.5$ in equations~\eqref{e:left}-\eqref{e:stay} for the simulation results reported in Figure~\ref{Fig5}, while we consider $\nu_1 = 0.5$ and $\nu_2 = 0.01$ for the simulation results of Figure~\ref{Fig6}. We define the pressure $p^k_i$ according to the following simplified barotropic relation
$$
\Pi(\rho^k_i) = K \, \rho^k_i \quad \text{with}  \quad K = 2,
$$
which satisfies conditions~\eqref{as:p}. We make use of the initial cell densities 
\beq
\label{ic111}
\rho^0_{1 i}  = A_{1} \exp{\left(-b_{1} \, x_i^{2}\right)} \quad \mbox{and} \quad \rho^0_{2 i}=\left\{
\begin{array}{ll}
0 , \quad \text{for } x_i \leq 13,
\\\\
A_{2} \exp{\left(-b_{2} (x_i-14)^{2}\right)}, \quad \text{for } x_i \in (13,29),
\\\\
0 , \quad \text{for } x_i \geq 29,
\end{array}
\right.
\eeq
where
$$
A_{1}=1.25\times10^{4}, \quad b_{1}=0.06, \quad A_{2}=2.5\times10^{4} \quad \mbox{and} \quad b_{2}=6\times 10^{-3}.
$$
The results presented in Figure~\ref{Fig5} and Figure~\ref{Fig6} correspond to one single simulation of our individual-based model.

\subsection{Setup of numerical simulations for the case of two populations -- Figure~\ref{Fig7} and Figure~\ref{Fig8}}
\label{appendix13}
For consistency with the system of equations \eqref{eq:pde3}, we choose $M=2$, and we set $G_1 \equiv G$ and $G_2 \equiv 0$ in equations \eqref{e:div}-\eqref{e:qui}, where $G$ is defined according to equation~\eqref{Gnum} with the homeostatic pressure $P=10 \times 10^4$ and the factor $\beta = 4 \times 10^{-5}$. We set $\nu_1 = 0.01$ and $\nu_2 = 0.5$ in equations~\eqref{e:left}-\eqref{e:stay} for the simulation results reported in Figure~\ref{Fig7}, while we consider $\nu_1 = 0.5$ and $\nu_2 = 0.01$ for the simulation results of Figure~\ref{Fig8}. We define the pressure $p^k_i$ according to the following barotropic relation
$$
\Pi(\rho^k_i) =  q \, (\rho_i^k- \rho^*)_+ \quad \mbox{where} \quad q=10 \quad \mbox{and} \quad \rho^* = r \, P \; \text{ with } \; r=10^{-3},
$$
which satisfies conditions~\eqref{as:pmr}. We make use of the initial cell densities~\eqref{ic111} with
$$
A_{1}=12.5\times10^{4}, \quad b_{1}=0.06, \quad A_{2}=25\times10^{4} \quad \mbox{and} \quad b_{2}=6\times 10^{-3}.
$$
The results presented in Figure~\ref{Fig7} and Figure~\ref{Fig8} correspond to one single simulation of our individual-based model.

\section{Details of numerical simulations of the continuum models}
\label{appendix2}
We let $x \in [0,100]$ and we construct numerical solutions for equation~\eqref{eq:pde1} and for the system of equations~\eqref{eq:pde3} complemented with zero Neumann boundary conditions. We use a finite volume method based on a time-splitting between the conservative and nonconservative parts. For the conservative parts, transport terms are approximated through an upwind scheme whereby the cell edge states are calculated by means of a high-order extrapolation procedure~\cite{leveque2002finite}, while the forward Euler method is used to approximate the nonconservative parts. We consider a uniform discretisation of the interval $[0,100]$ that consists of 1001 points and we perform numerical computations in {\sc Matlab}. For all simulations, we use the definition~\eqref{Gnum} of $G(p)$. Further details specific either to the case of one cell population or to the case of two cell populations are provided in the next subsections.

\subsection{Setup of numerical simulations for equation~\eqref{eq:pde1}}
\label{appendix21}
The rate $G$ is defined according to equation \eqref{Gnum} with the homeostatic pressure $P=120 \times 10^4$ for the numerical solutions reported in Figure~\ref{Fig2} and Figure~\ref{Fig3}, while we choose $P=120 \times 10^5$ for the numerical solutions of Figure~\ref{Fig4}. Moreover, we choose $\beta = 4 \times 10^{-6}$ for the numerical solutions reported in Figure~\ref{Fig2}, $\beta = 4 \times 10^{-5}$ for the numerical solutions of Figure~\ref{Fig4}, and $\beta\in\left\{ 1.5 \times 10^{-6},  4 \times 10^{-6},  4 \times 10^{-5} \right\}$ for the numerical solutions reported in Figure~\ref{Fig3}. We define the pressure $p$ according to the following barotropic relation 
$$
\Pi(\rho) = K_{\gamma} \, \rho^{\gamma} \quad \text{with}  \quad K_{\gamma} = \frac{\gamma+1}{\gamma} \quad \text{and} \quad \gamma>1,
$$
which satisfies conditions~\eqref{as:p}. We let $\gamma \in \left\{1.2, 1.5, 2 \right\}$ for the numerical solutions of Figure~\ref{Fig4}, while we choose $\gamma = 1.2$ for the numerical solutions reported in Figure~\ref{Fig2} and Figure~\ref{Fig3}. Given the parameter values used for the individual-based model in the case of one single cell population, we choose the mobility $\mu = 4.166\times10^{-7}$ for the numerical solutions reported in Figure~\ref{Fig2} and Figure~\ref{Fig3}, while we set $\mu = 4.166\times10^{-8}$ for the numerical solutions of Figure~\ref{Fig4}. In this way, both values of $\mu$ satisfy condition~\eqref{asymat} for $h=1$. We impose the initial condition
$$
\rho(0,x) = A \, \exp{\left(- b \, x^{2}\right)} \quad \text{with} \quad A = 2\times 10^{4} \quad \mbox{and} \quad b=4\times 10^{-3}.
$$

\subsection{Setup of numerical simulations for the system of equations~\eqref{eq:pde3} -- Figure~\ref{Fig5} and Figure~\ref{Fig6}}
\label{appendix22}
The rate $G$ is defined according to equation \eqref{Gnum} with the homeostatic pressure $P=10 \times 10^4$ and $\beta = 4 \times 10^{-5}$. Given the parameter values used for the individual-based model in the case of two cell populations, we choose the mobilities $\mu_1 = 2.5\times 10^{-7}$ and $\mu_2 = 1.25\times 10^{-5}$ for the numerical solutions reported in Figure~\ref{Fig5}, and the mobilities $\mu_1 = 1.25\times 10^{-5}$ and $\mu_2 = 2.5\times 10^{-7}$ for the numerical solutions of Figure~\ref{Fig6}. This ensures that conditions~\eqref{asymat} for $h=1,2$ are satisfied. We define the pressure $p$ according to the following simplified barotropic relation 
$$
\Pi(\rho) = K \, \rho \quad \text{with}  \quad K = 2,
$$ 
which satisfies conditions~\eqref{as:p}. We impose the initial conditions
\beq
\label{ic111pde}
\rho^0_{1}(0,x)  = A_{1} \exp{\left(-b_{1} \, x^{2}\right)} \quad \mbox{and} \quad \rho^0_{2}=\left\{
\begin{array}{ll}
0 , \quad \text{for } x \leq 13,
\\\\
A_{2} \exp{\left(-b_{2} (x-14)^{2}\right)}, \quad \text{for } x \in (13,29),
\\\\
0 , \quad \text{for } x \geq 29,
\end{array}
\right.
\eeq
where
$$
A_{1}=1.25\times10^{4}, \quad b_{1}=0.06, \quad A_{2}=2.5\times10^{4} \quad \mbox{and} \quad b_{2}=6\times 10^{-3}.
$$

\subsection{Setup of numerical simulations for the system of equations~\eqref{eq:pde3} -- Figure~\ref{Fig7} and Figure~\ref{Fig8}}
\label{appendix23}
The rate $G$ is defined according to equation \eqref{Gnum} with the homeostatic pressure $P=10 \times 10^4$ and $\beta = 4 \times 10^{-5}$. Given the parameter values used for the individual-based model in the case of two cell populations, we choose the mobilities $\mu_1 = 2.5\times 10^{-7}$ and $\mu_2 = 1.25\times 10^{-5}$ for the numerical solutions reported in Figure~\ref{Fig7}, and the mobilities $\mu_1 = 1.25\times 10^{-5}$ and $\mu_2 = 2.5\times 10^{-7}$ for the numerical solutions of Figure~\ref{Fig8}. This ensures that conditions~\eqref{asymat} for $h=1,2$ are satisfied. We define the pressure $p$ according to the following barotropic relation 
$$
\Pi(\rho) =  q \, (\rho- \rho^*)_+ \quad \mbox{where} \quad q=10 \quad \mbox{and} \quad \rho^* = r \, P \; \text{ with } \; r=10^{-3},
$$ 
which satisfies conditions~\eqref{as:pmr}. We impose the initial conditions~\eqref{ic111pde} with
$$
A_{1}=12.5\times10^{4}, \quad b_{1}=0.06, \quad A_{2}=25\times10^{4} \quad \mbox{and} \quad b_{2}=6\times 10^{-3}.
$$

\begin{acknowledgements}
FRM is funded by the Engineering and Physical Sciences Research Council (EPSRC). TL and FRM gratefully acknowledge Dirk Drasdo and Lu\'is Neves de Almeida for insightful discussions.

\end{acknowledgements}

\bibliographystyle{spmpsci}      
\bibliography{Manuscript_Revision}   


\end{document}